\theoremstyle{plain}
\newtheorem{theorem}{Theorem}
\newtheorem*{theorem*}{Theorem}
\newcommand{\PP}{\mathbf{P}}
\newcommand{\A}{\mathbf{A}}
\newcommand{\G}{\mathbf{G}}
\newcommand{\X}{\mathbf{X}}
\newcommand{\Z}{\mathbf{Z}}
\newcommand{\E}{\mathbf{E}}
\newcommand{\T}{\mathbf{T}}
\DeclareMathOperator*{\argmin}{arg\,min}
\title{\LARGE \bf
Trial-and-Error Learning in Decentralized Matching Markets
}
\author{Vade Shah, Bryce L. Ferguson, and Jason R. Marden 
\thanks{This work is supported by ONR grant \#N00014-20-1-2359, AFOSR grants \#FA9550-20-1-0054 and \#FA9550-21-1-0203, and NSF GRFP grant \#2139319.}
\thanks{V. Shah ({\tt\small vade@ucsb.edu}) and J. R. Marden are with the Department of Electrical and Computer Engineering at the University of California, Santa Barbara, CA. B. L. Ferguson is with the Department of Electrical Engineering and Computer Sciences at the University of California, Berkeley, CA.}%
}
\begin{document}

\maketitle
\thispagestyle{empty}
\pagestyle{empty}

\begin{abstract} Two-sided matching markets, environments in which two disjoint groups of agents seek to partner with one another, arise in several contexts. In static, centralized markets where agents know their preferences, standard algorithms can yield a stable matching. However, in dynamic, decentralized markets where agents must learn their preferences through interaction, such algorithms cannot be used. Our goal in this paper is to identify achievable stability guarantees in decentralized matching markets where (i) agents have limited information about their preferences and (ii) no central entity determines the match. Surprisingly, our first result demonstrates that these constraints do not preclude stability—simple ``trial and error" learning policies guarantee convergence to a stable matching without requiring coordination between agents. Our second result shows that more sophisticated policies can direct the system toward a particular group's optimal stable matching. This finding highlights an important dimension of strategic learning: when agents can accurately model others' policies, they can adapt their own behavior to systematically influence outcomes in their favor—a phenomenon with broad implications for learning in multi-agent systems.
\end{abstract}


\section{Introduction}

Two-sided matching markets are a fundamental feature of various socioeconomic and engineered systems in which agents from two distinct groups interact to form mutually beneficial partnerships. These markets are ubiquitous, but the ways in which matchings---sets of partnerships between agents---emerge can vary significantly across contexts. In the American residency admissions process, for instance, residency applicants and hospitals submit their rankings of one another to the National Resident Matching Program (NRMP) (Figure \ref{fig:examples}, left), which uses these rankings to identify a stable matching where no applicant and hospital prefer one another to their assigned partner. Stability is paramount, as it ensures notions of fairness and efficiency in assignments. At least one stable matching is guaranteed to exist in every two-sided matching market \cite{gale1962college}, and several celebrated algorithms can identify them \cite{gale1962college, vate1989linear, roth1993stable, roth1997effects}. Because this process is \emph{static} (hospitals and residents form one, permanent matching) and because the market is \emph{centralized} (the matching is assigned), stability is always ensured.

On the other hand, in \emph{dynamic, decentralized} matching markets, agents repeatedly form impermanent partnerships without a centralized matchmaker. Oftentimes, matchings in these markets arise as the result of an active group of agents proposing to a passive group that accepts these proposals. Depending on the agents' policies—how they choose to make and accept proposals—different matchings may arise over time. When agents know their preferences, natural policies ensure eventual convergence to a stable matching \cite{gale1962college, roth1990random, blum1997vacancy, ackermann2008uncoordinated}, indicating that the stability engineered in centralized markets often arises organically in decentralized markets.

However, in many modern decentralized markets, agents do not know their own preferences \textit{a priori}. For example, consider Amazon Mechanical Turk (MTurk), an online crowdsourcing platform where employers seek out workers to complete short-term contracts. MTurk hosts thousands of regular users on both sides of the market, which can make it difficult for workers to accurately assess and compare all possible employers and vice versa. However, once a worker accepts an offer from an employer, they both learn key attributes about one another (e.g., pay, quality of work), that informs their decisions about future employment opportunities. Importantly, in markets like MTurk, this kind of \emph{information}—or the lack thereof—dictates not just how agents follow policies, but also what a policy is, requiring innovations for ensuring convergence to stability.

Several recent works have designed policies whereby agents \emph{learn} to form stable matchings through repeated interactions and observations in dynamic, decentralized markets. In the case of \emph{one-sided} learning, where the accepting group of agents knows their own preferences \emph{a priori}, but the proposing group must learn them over time, a large body of work has extended multi-armed bandits results \cite{lai1985asymptotically, bubeck2012regret} to develop policies that guarantee the proposers' regret grows at most logarithmically \cite{liu2020competing, liu2021bandit, basu2021beyond, maheshwari2022decentralized, kong2023player, hosseini2024putting}. Recently, these results have also been extended to the case of \emph{two-sided} learning (Figure \ref{fig:examples}, right) where both groups of agents must learn their own preferences \cite{das2005two, pagare2023two, pokharel2023converging}.


\begin{figure}
    \centering
    \includegraphics[width=\linewidth]{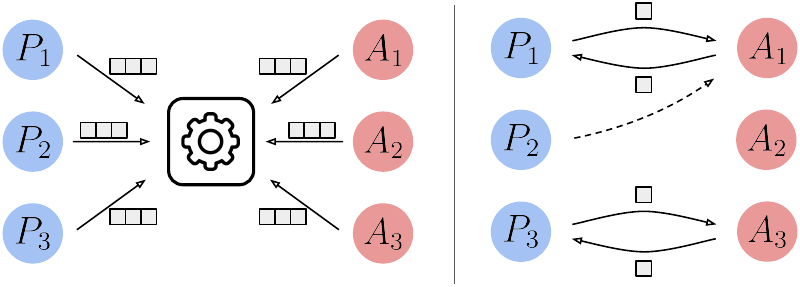}
    \caption{A two-sided matching market with 3 proposers and 3 acceptors. In a static, centralized market with complete information (left), an algorithm gathers agents' known preferences and assigns a matching. In a dynamic, 
    decentralized environment with two-sided uncertainty (right), agents learn their own preferences over one another as they interact and form matchings.}
    \label{fig:examples}
\end{figure}

Although these bandit-based techniques provide valuable transient guarantees, their long-run convergence guarantees are limited. In contrast, a parallel line of work on the one-sided problem has utilized techniques from learning in games \cite{young1993evolution, marden2009payoff, pradelski2012learning, marden2014achieving}, a subfield of game theory concerned with the design of simple policies that ensure convergence to desirable configurations (Nash equilibria, correlated equilibria, etc.), to guarantee convergence to stable matchings in decentralized markets \cite{etesami2024decentralized, shah2024learning}. While these simple policies do not provide strong transient guarantees, they ensure convergence, enhancing our understanding of what is fundamentally achievable with learning approaches.

Motivated by the latter direction, this paper asks whether \textbf{there exist policies that guarantee convergence to stable matchings in two-sided decentralized matching markets where no agent knows their own preferences}. Our first result (Theorem \ref{thm:sm}) demonstrates that stability is still achievable even in the absence of information and a central matchmaker: simple ``trial-and-error" learning policies that only use an agent's own prior observations guarantee convergence to a stable matching. Our second result (Theorem \ref{thm:aosm}) delves further into what can be achieved with learning, showing that when the proposers follow simple trial-and-error learning while the acceptors follow a more sophisticated policy, they converge instead to the acceptors' optimal stable matching. These results indicate that, fundamentally, little is needed to achieve stable matchings; moreover, if one group of agents can model others' policies, then they can further ensure that the equilibrium outcome lies in their favor.


\section{Model}

In this section, we first introduce the basic elements of the two-sided matching market and discuss its equilibrium concepts. Then, we present a game-theoretic model of how agents repeatedly form matchings and learn from their observations in a dynamic, decentralized matching market.

\subsection{Markets and Matchings}

We describe a two-sided matching market by the tuple $(\PP, \A, \mathbf{U})$, where
\begin{itemize}
    \item $\PP \triangleq \{ P_1, \dots, P_n \}$ is the set of \emph{proposers},
    \item $\A \triangleq \{ A_1, \dots, A_m \}$ is the set of \emph{acceptors}, and
    \item $\mathbf{U} \triangleq \{ U_{P_1}, \dots, U_{P_n}, U_{A_1}, \dots, U_{A_m} \}$ is the set of \emph{utility functions} (also referred to as \emph{preferences}).
\end{itemize}
The proposers and acceptors are disjoint sets of agents representing opposite sides of the market. Each proposer $P_i$ has a utility function $U_{P_i} : \A \cup \{\emptyset\} \to [0 , 1)$ that describes the value they associate with every possible acceptor, and each acceptor $A_j$ has a similar utility function $U_{A_j} : \PP \cup \{\emptyset\} \to [0 , 1)$ over the proposers. We say that proposer $P_i$ \emph{prefers} $A_j$ to $A_k$ if $U_{P_i}(A_j) > U_{P_i}(A_k)$, which we shorthand as $A_j \succ_{P_i} A_k$, and we use the terminology and the notation analogously for the acceptors\footnote{In standard matching market models, agents' preferences are typically represented ordinally. The cardinal utility functions described here are more general, as ordinal rankings can be extracted from them. Thus, we interchangeably use the cardinal and ordinal representations of agents' preferences when convenient.}. We assume that each utility function is injective, meaning no agent is indifferent between any two partners, and we assume that for all $P_i \in \PP$ and $A_j \in \A$, $U_{P_i}(\emptyset) = U_{A_j}(\emptyset) = 0$, indicating that every agent prefers having a partner to not having one, which we write as $\emptyset$.

A \emph{matching} $\mu \subset (\PP \cup \{\emptyset\}) \times (\A \cup \{\emptyset\})$ is a collection of pairs that describes whom each agent is partnered with, where every agent appears in exactly one pair in $\mu$. If $(P_i, A_j) \in \mu$, then $P_i$ and $A_j$ are \emph{matched} with one another. Otherwise, if $(P_i, \emptyset) \in \mu$ or $(A_j, \emptyset) \in \mu$, then $P_i$ or $A_j$ is \emph{unmatched}, respectively. As a slight abuse of notation, we denote the partner of $P_i$ or $A_j$ as $\mu_{P_i}$ or $\mu_{A_j}$, respectively.

The set of all possible matchings $\mathbf{M} \subset 2^{(\PP \cup \{\emptyset\}) \times (\A \cup \{\emptyset\})}$ grows exponentially with the number of agents. In practice, a matchmaker typically seeks to identify from this set a matching that is stable, i.e., one in which no proposer and acceptor would rather be with one another than their assigned partners. We say $P_i$ and $A_j$ form a \emph{blocking pair} for a matching $\mu$ if $A_j \succ_{P_i} \mu_{P_i}$ and $P_i \succ_{A_j} \mu_{A_j}$. Any matching that has a blocking pair is unstable in the sense that agents can deviate from their associated partner and increase their (and their new partner's) utility.
A matching $\mu$ is a \emph{stable matching} if there are no blocking pairs.
It is well known that in every two-sided matching market $(\PP, \A, \mathbf{U})$, the set of stable matchings $\mathbf{M}^{\rm st} \subset \mathbf{M}$ is nonempty \cite{gale1962college}. Furthermore, there exists a unique \emph{acceptor-optimal} stable matching $\mu^* \in \mathbf{M}^{\rm st}$ characterized by the property that for any other stable matching $\mu$, $\mu_{A_j}^* \succeq_{A_j} \mu_{A_j}$ for all $A_j \in \A$, meaning that every acceptor weakly prefers their partner in the acceptor-optimal stable matching to their partner in any other stable matching.

\subsection{Games and Learning}\label{sec:games_learning}

In this section, we describe how the agents interact and form matchings in a dynamic, decentralized market. We model their interactions as a sequence of games where at each timestep $t \in \{1, 2, \dots\}$ a game evolves as follows:

\vspace{.2cm}
\noindent \textbf{Phase 1:} Each proposer $P_i$ selects an action $a_{P_i}^t$ according to a probability distribution $\sigma_{P_i}^t$ over the set of possible actions $\A \cup \{\emptyset\}$. Their action represents the acceptor they choose to propose to, where $a_{P_i}^t = \emptyset$ means that $P_i$ chooses to remain unmatched and does not propose to anyone. We write the collection of proposer actions as $a_\PP^t \triangleq \{a_{P_1}^t, \dots, a_{P_n}^t\}$.

\vspace{.2cm}
\noindent \textbf{Phase 2:} Each acceptor $A_j$ observes the set of proposals they receive, which we denote by $S_{A_j}^t \triangleq \{ P_i \in \PP \ \vert \ a_{P_i}^t = A_j \}$, and selects an action $a_{A_j}^t$ according to a probability distribution $\sigma_{A_j}^t$ over the set of possible actions $S_{A_j}^t \cup \{\emptyset\}$. Their action represents the proposer they choose to accept, where $a_{A_j}^t = \emptyset$ means that $A_j$ chooses to remain unmatched and rejects any and all proposals they receive. We write the collection of acceptor actions as $a_{\A}^t \triangleq \{a_{A_1}^t, \dots, a_{A_m}^t\}$.

\vspace{.2cm}
\noindent \textbf{Phase 3:} The action profile $a^t \triangleq \{a_\PP^t, a_{\A}^t \}$ induces a matching $\mu(a^t)$, where $(P_i, A_j) \in \mu(a^t)$ if and only if $a_{P_i}^t = A_j$ and $a_{A_j}^t = P_i$, and any agent whose action is unreciprocated is unmatched. Each agent then receives the utility that they associate with their partner, i.e., every proposer $P_i$ and acceptor $A_j$ receives utility $u_{P_i}^t(a^t) \triangleq U_{P_i}(\mu_{P_i}(a^t))$ and $u_{A_j}^t(a^t) \triangleq U_{A_j}(\mu_{A_j}(a^t))$, respectively.
\vspace{.2cm}

Here, we assume that when agents select their actions, they know whom they can attempt to match with (i.e., every proposer knows $\A$ and every acceptor knows $\PP$), but they \textbf{do not know their own preferences} over these sets (i.e., they do not know the full form of their utility function). Furthermore, we assume that at no point does any agent observe the action or utility of any other agent (except for each acceptor who observes the actions of those that proposed to them).

\begin{figure}
    \centering
    \includegraphics[width=\linewidth]{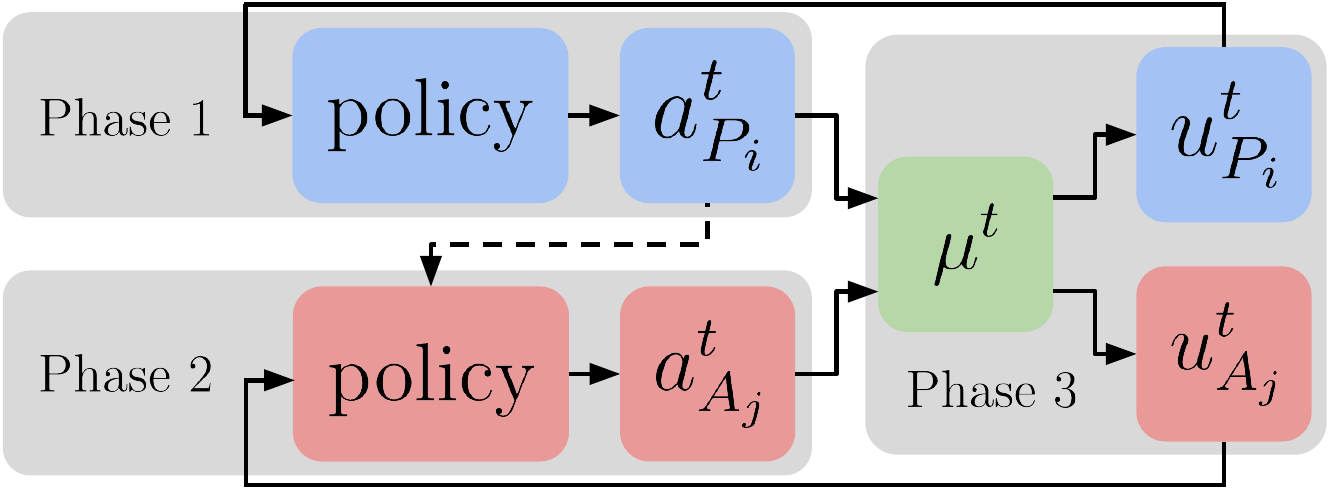}
    \caption{The iterative feedback process through which agents select actions. For simplicity, the diagram explicitly depicts the behavior only of $P_i$ and $A_j$, but the process depends on the behavior of all agents. The dashed line indicates that $A_j$ observes $a_{P_i}^t$ only if $a_{P_i}^t = A_j$.}
    \label{fig:diagram}
\end{figure}


The focus of this work is to develop the current understanding of what can be achieved by learning policies in matching markets. Here, we focus on \emph{completely uncoupled policies} where agents use only their own prior observations of their utilities to select their actions, and ask whether they can
ensure convergence to stable matches.
Formally, we consider policies $\pi_\PP$ and $\pi_\A$ of the form
\begin{align}
    a_{P_i}^t \sim \sigma_{P_i}^t &= \pi_\PP \left(\{ a_{P_i}^\tau, U_{P_i}^\tau \}_{\tau \in \{ 1, 2, \dots \}} \right) \label{eq:proposer_policy} \\
    a_{A_j}^t \sim \sigma_{A_j}^t &= \pi_\A \left(\{ S_{A_j}^\tau, a_{A_j}^\tau, U_{A_j}^\tau \}_{\tau \in \{ 1, 2, \dots \}} \right) \label{eq:acceptor_policy}
\end{align}
for proposer $P_i$ and acceptor $A_j$, respectively. 
This extends the definition of completely uncoupled\footnote{We adopt this definition of completely uncoupled for consistency with the existing literature, but we remark that the level of generality in the information available to the agents and the structure of $\pi$ is unnecessary; the policies presented in this work only require agents to keep track of a limited subset of their prior observations and are stationary. Importantly, there exist impractical trivial policies that either rely on arbitrarily large memories or are non-stationary and can guarantee convergence to any matching. Although these kinds of policies may yield rapid convergence, our results establish that even with limited memory and stationarity, convergence itself is still attainable.} policies \cite{marden2009payoff, pradelski2012learning, marden2014achieving} to the setting where agents make decisions sequentially.



\section{Two-Sided Trial-and-Error Learning}

The existence of completely uncoupled policies that guarantee convergence to equilibrium configurations is well-established in a variety of multi-agent settings, such as potential games, weakly acyclic games, and interdependent games \cite{marden2009payoff, pradelski2012learning, marden2014achieving}). In this section, we seek to understand whether there exist completely uncoupled policies that guarantee convergence to stable matchings in two-sided markets even when there is no central matchmaker and the agents do not know their preferences. Theorem \ref{thm:sm} establishes that these constraints do not preclude stability.

\begin{theorem}\label{thm:sm}
    Consider a matching market $(\PP, \A, \mathbf{U})$. For any $\delta \in (0, 1]$, there exist completely uncoupled policies $\pi_\PP$ and $\pi_\A$ such that $\mathbb{P}[\mu^t \in \mathbf{M}^{\rm st}] > 1 - \delta$ for all sufficiently large $t$.
\end{theorem}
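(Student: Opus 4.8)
The plan is to adapt the ``trial-and-error'' / perturbed-best-response machinery of Marden--Young-style learning in games (\cite{marden2009payoff, pradelski2012learning, marden2014achieving}) to the sequential proposer-then-acceptor setting, and to argue that the induced perturbed Markov chain on the action profiles has its stochastically stable states supported on action profiles that produce stable matchings. Concretely, I would have each agent maintain a \emph{mood} (content / discontent) together with a \emph{benchmark action} and \emph{benchmark utility}. In any period, a content agent replays its benchmark action with probability $1-\varepsilon^c$ for a suitable constant $c$, and experiments uniformly otherwise; a discontent agent randomizes (nearly) uniformly over all actions. After the matching $\mu(a^t)$ realizes and each agent sees its own utility, an agent updates its mood: a content agent that still sees its benchmark utility stays content; a content agent whose realized utility differs from its benchmark becomes discontent (or becomes content at the new, strictly higher, value with a small probability weighted by $\varepsilon$-powers so that ``accepting a raise'' is favored); a discontent agent adopts whatever action/utility pair it just played as its new benchmark and flips to content with probability proportional to an $\varepsilon$-power that is increasing in the realized utility. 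The only subtlety relative to the game-theoretic template is that the acceptors move second and only on the restricted action set $S_{A_j}^t\cup\{\emptyset\}$; I would handle this by letting an acceptor's ``experiment'' be over whichever proposals it actually received, and checking that the reachability arguments below still go through because proposers experiment often enough to offer any acceptor any desired proposal with positive (polynomially-in-$\varepsilon$) probability.

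The heart of the argument is a resistance-tree / Freidlin--Wentzell analysis (as in \cite{young1993evolution}). First I would show that the unperturbed process ($\varepsilon = 0$) has recurrence classes of only two types: (i) the ``all-content, aligned'' states $C_0$ in which every agent is content, every agent's realized action equals its benchmark action, and every agent's realized utility equals its benchmark utility --- these are exactly the states in which the current action profile is a fixed point and its matching $\mu$ is such that no agent is (immediately) unhappy; and (ii) ``all-discontent'' states $D_0$. A short combinatorial lemma then shows that a content-aligned absorbing state with matching $\mu$ persists under the zero-noise dynamics if and only if $\mu$ is \emph{stable}: if $(P_i,A_j)$ were a blocking pair, then with the benchmark play the proposer $P_i$ is matched to something worse than $A_j$ and $A_j$ to something worse than $P_i$, so a single joint experiment in which $P_i$ proposes to $A_j$ and $A_j$ accepts $P_i$ is self-reinforcing: both would see a strictly higher utility, both would (with positive probability under the ``accept a raise'' rule) reset their benchmarks upward and stay content, and this new profile is closer to stability --- hence no unstable matching can be a recurrent state of the perturbed chain's stable support. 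Conversely, if $\mu$ is stable, any unilateral or small-coalition experiment makes at least one experimenter strictly worse off, it becomes discontent, and the system is pulled back. I would also need the standard fact (true in this model since unmatched is strictly worst and preferences are strict) that from the all-discontent state $D_0$ the process reaches, with polynomially-small-in-$\varepsilon$ probability in one step, \emph{any} content-aligned profile whose matching is stable, and reaches a non-stable content-aligned profile only through a strictly more costly transition.

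With those structural facts in hand, the resistance computation is the usual one: escaping a content-aligned stable state requires at least one agent to experiment \emph{and} then the ``resetting to a lower benchmark'' or ``flip to content at a non-improvement'' events, each of which carries a strictly positive power of $\varepsilon$; by tuning the constant $c$ in the experimentation rate one makes the resistance out of a stable state strictly larger than the resistance out of any non-stable content-aligned state or out of $D_0$, so every minimum-resistance tree in the rooted-tree characterization of stochastic stability is rooted at a stable-matching state. Hence the stochastically stable states are exactly the $\{$content, aligned, $\mu\in\mathbf M^{\rm st}\}$ states, which by \cite{gale1962college} form a nonempty set, so $\lim_{\varepsilon\to0}$ of the stationary distribution puts all mass there; fixing $\varepsilon$ small enough that this mass exceeds $1-\delta/2$ and then $t$ large enough that the chain has mixed to within $\delta/2$ of stationarity gives $\mathbb P[\mu^t\in\mathbf M^{\rm st}]>1-\delta$, as claimed. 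The main obstacle I anticipate is the bookkeeping forced by the two-phase move structure: unlike a simultaneous-move game, an acceptor cannot unilaterally realize an arbitrary action, so I must verify carefully that the experimentation of the proposers supplies, with the right $\varepsilon$-order probability, exactly the proposals needed both for the self-reinforcement step at a blocking pair and for the reachability-from-$D_0$ step, and that no unintended low-resistance path to a non-stable absorbing profile is created by this asymmetry; getting the accounting of $\varepsilon$-exponents right there --- rather than any single conceptual idea --- is where the real work lies.
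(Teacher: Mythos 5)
Your overall strategy---mood-based trial-and-error policies analyzed as a regular perturbed Markov process, with Young's resistance-tree characterization of stochastic stability---is exactly the paper's route, but the two steps that carry all the weight are missing or, as described, would fail. First, the resistance accounting. Under your two-mood scheme, leaving a stable all-content state costs a single experimentation (resistance $c$): the experimenting proposer is rejected, sees a utility different from its benchmark, and becomes discontent, and so does its abandoned partner; there is no zero-resistance path back. Meanwhile, resolving a blocking pair at an \emph{unstable} state costs the proposer's experimentation \emph{plus} the acceptor's deviation from its benchmark (the acceptor moves second and, under your rule, replays its benchmark with probability $1-\varepsilon^{c}$) \emph{plus} the positive $\varepsilon$-powers in your ``accept a raise'' rule. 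So the inequality you need---escaping stable states strictly costlier than escaping unstable ones---is asserted but not established, and with your dynamics it points the wrong way. The paper engineers this separation structurally: content acceptors accept any non-baseline proposal deterministically (so a blocking-pair resolution has resistance exactly $1$), and the extra watchful/hopeful moods give a one-step grace period so that a single failed experimentation at a stable matching returns to it along a zero-resistance path, which is what forces the minimum escape resistance from a stable state up to $2$ (Claims 1.2--1.3). Your utility-weighted acceptance probabilities (the Pradelski--Young device) are not needed for Theorem~1 at all---the paper uses them only in the acceptor-optimal policy of Theorem~2---and importing them here biases the selection toward high-utility configurations rather than toward stability.

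Second, even granting a clean $1$-versus-$2$ resistance separation, rootedness of minimum-resistance trees at stable states does not follow from edge-level comparisons alone. The paper needs (i) the Roth--Vande Vate paths-to-stability result, guaranteeing that from any unstable matching there is a finite sequence of single blocking-pair resolutions ending at a stable matching, hence a resistance-$1$-per-step path in the graph, and (ii) an explicit tree-surgery argument (its Claim 1.4, with two cases) showing any tree rooted at a transient or unstable state can be rewired into a strictly lighter tree rooted at a stable state. Your proposal replaces both with the unsupported assertion that ``tuning $c$'' settles the tree comparison, and with the claim that from the all-discontent state stable profiles are reached strictly more cheaply than unstable ones---which is not argued and is not how the selection works (sums of utility-dependent exponents select for welfare, not stability). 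You correctly flag that the $\varepsilon$-exponent bookkeeping in the sequential proposer-then-acceptor structure is where the real work lies, but that bookkeeping is precisely what is absent, and with the policy as you specified it the numbers do not come out in the needed order.
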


\subsection*{Simple Trial-and-Error Learning Policies}
    The proof of Theorem \ref{thm:sm} is by construction of a pair of policies that are completely uncoupled and provide the desired asymptotic guarantees. In this section we specify these policies, but we defer their formal analysis to the Appendix. We remark that although Theorem \ref{thm:sm} is proven constructively, we have chosen to state it as an existence result as we view the main contribution not as the policies themselves, but rather the guarantees they provide: that is, convergence to stable matches is achievable even in decentralized markets with uncertainty with very simple policies.
    We do not claim that these policies are unique or that they achieve the best convergence rate. Importantly, however, they are indeed completely uncoupled \eqref{eq:proposer_policy}, \eqref{eq:acceptor_policy}, a property that follows from other so-called \emph{trial-and-error learning} policies \cite{pradelski2012learning, shah2024learning} that inspire their construction.
    
    
    First, we introduce proposer trial-and-error learning (PTL), the policy referenced as $\pi_\PP$ in the Theorem statement. PTL relies on a \emph{state} variable $x_{P_i} = \{m, \underline{a}_{P_i}, \underline{u}_{P_i}, \overline{a}_{P_i}, \overline{u}_{P_i}\}$, where
    \begin{itemize}
        \item $m \in \{ C, H, W, D\}$ is their \emph{mood}, which is either content ($C$), hopeful ($H$), watchful ($W$), or discontent ($D$);
        \item $\underline{a}_{P_i} \in \A$ is their \emph{baseline action};
        \item $\underline{u}_{P_i} \in [0, 1)$ is their \emph{baseline utility};
        \item $\overline{a}_{P_i} \in \A$ is their \emph{trial action};
        \item $\overline{u}_{P_i} \in [0, 1)$ is their \emph{trial utility}.
    \end{itemize}
    An action is selected by a proposer $P_i$ using the PTL policy by sampling an acceptor to propose to (parameterized by an \emph{experimentation rate} $\varepsilon$) based on their individual state $x_{P_i}$. Specifically, they select their action according to Algorithm \ref{alg:p_act} (which requires their state) in Phase 1, and their state evolves according to Algorithm \ref{alg:p_update} in Phase 3. Together, these action selection and state update rules form a completely uncoupled policy.
    \renewcommand{\thealgorithm}{1\alph{algorithm}}
    \setlength{\textfloatsep}{1pt}

\begin{figure}
    \centering
    \begin{algorithm}[H]
    \caption{Proposer action selection rule}\label{alg:p_act}
    \begin{algorithmic}
    \Require $x_{P_i}$, $\varepsilon$
    \If{$m_{P_i} \in \{ C, D \}$}
        \State select $a_{P_i} = \emptyset$ with probability (w.p.) $\varepsilon^2$; or
        \State select $a_{P_i} \in \A$ uniformly at random w.p. $\varepsilon$; or
        \State select $a_{P_i} = \underline{a}_{P_i}$ w.p. $1 - \varepsilon - \varepsilon^2$
    \ElsIf{$m_{P_i} = H$}
        \State select $a_{P_i} = \overline{a}_{P_i}$
    \ElsIf{$m_{P_i} = W$}
        \State select $a_{P_i} = \underline{a}_{P_i}$
    \EndIf
    \end{algorithmic}
    \end{algorithm}
    \begin{algorithm}[H]
    \caption{Proposer state update rule}\label{alg:p_update}
    \begin{algorithmic}
    \Require $x_{P_i}$, $a_{P_i}^t$, $u_{P_i}^t$
    \If{$m_{P_i} = C$}
        \If{$a_{P_i}^t = \underline{a}_{P_i}$}
        \State $x_{P_i} = \begin{cases}
                (W, \underline{a}_{P_i}, \underline{u}_{P_i}, \emptyset, 0) & u_{P_i}^t < \underline{u}_{P_i} \\
                (C, \underline{a}_{P_i}, \underline{u}_{P_i}, \emptyset, 0) & u_{P_i}^t = \underline{u}_{P_i} \\
                (H, \underline{a}_{P_i}, \underline{u}_{P_i}, a_{P_i}^t, u_{P_i}^t) & u_{P_i}^t > \underline{u}_{P_i}
            \end{cases}$
        \ElsIf{$a_{P_i}^t \neq \underline{a}_{P_i}$}
        \State $x_{P_i} = \begin{cases}
                (C, \underline{a}_{P_i}, \underline{u}_{P_i}, \emptyset, 0) & u_{P_i}^t \leq \underline{u}_{P_i} \\
                (H, \underline{a}_{P_i}, \underline{u}_{P_i}, a_{P_i}^, u_{P_i}^t) & u_{P_i}^t > \underline{u}_{P_i}
            \end{cases}$
        \EndIf
    \ElsIf{$m_{P_i} = H$}
        \State $x_{P_i} = \begin{cases}
            (D, \emptyset, 0, \emptyset, 0) & u_{P_i}^t \neq \overline{u}_{P_i} \text{ and } \underline{u}_{P_i} = 0 \\
            (C, \underline{a}_{P_i}, \underline{u}_{P_i}, \emptyset, 0) & u_{P_i}^t \neq \overline{u}_{P_i} \text{ and } \underline{u}_{P_i} > 0 \\
            (C, \overline{a}_i, \overline{u}_{P_i}, \emptyset, 0) & u_{P_i}^t = \overline{u}_{P_i}
        \end{cases}$
    \ElsIf{$m_{P_i} = W$}
        \State $x_{P_i} = \begin{cases}
            (D, \emptyset, 0, \emptyset, 0) & u_{P_i}^t < \underline{u}_{P_i} \\
            (C, \underline{a}_{P_i}, \underline{u}_{P_i}, \emptyset, 0) & u_{P_i}^t \geq \underline{u}_{P_i} \\
        \end{cases}$
    \ElsIf{$m_{P_i} = D$}
        \State $x_{P_i} = \begin{cases}
                (D, \emptyset, 0, \emptyset, 0) & u_{P_i}^t = \underline{u}_{P_i} \\
                (H, \underline{a}_{P_i}, \underline{u}_{P_i}, a_{P_i}^t, u_{P_i}^t) & u_{P_i}^t > \underline{u}_{P_i}
            \end{cases}$
    \EndIf
    \end{algorithmic}
    \end{algorithm}
\end{figure}

    Next, we introduce acceptor trial-and-error learning (ATL), the policy referenced as $\pi_\A$ in the Theorem statement. ATL also relies on a \emph{state} variable $x_{A_j} \triangleq (m_{A_j}, \underline{a}_{A_j}, \underline{u}_{A_j}, \underline{S}_{A_j}, \overline{a}_{A_j}, \overline{u}_{A_j})$ maintained by each acceptor $A_j \in \A$, where the added variable $\underline{S}_{A_j}$ is their \emph{baseline proposal set}. The ATL policy provides acceptor $A_j$ their action based on their respective state $x_{A_j}$. Specifically, they select their action according to Algorithm \ref{alg:a_act} in Phase 2, and their state evolves according to Algorithm \ref{alg:a_update} in Phase 3. Together, these rules form a completely uncoupled policy. Notice that unlike PTL, ATL is a purely deterministic policy.

    \setcounter{algorithm}{0}
    \renewcommand{\thealgorithm}{2\alph{algorithm}}

\begin{figure}
    \centering
    \begin{algorithm}[H]
    \caption{Acceptor action selection rule}\label{alg:a_act}
    \begin{algorithmic}
    \Require $x_{A_j}$, $S_{A_j}^t$
    \If{$m_{A_j} = C$}
        \State select $a_{A_j} = \underline{a}_{A_j}$ if $\underline{a}_{A_j} \in S_{A_j}^t \subseteq \underline{S}_{A_j}$; or
        \State select $a_{A_j} = \emptyset$ if $\underline{a}_{A_j} \notin S_{A_j}^t \subset \underline{S}_{A_j}$; or
        \State select $a_{A_j} \in S_{A_j}^t \setminus \underline{S}_{A_j}$ uniformly at random.
    \ElsIf{$m_{A_j} = H$}
        \State select $a_{A_j} = \overline{a}_{A_j}$ if $\overline{a}_{A_j} \in S_{A_j}^t$; or
        \State select $a_{A_j} = \underline{a}_{A_j}$ if $\underline{a}_{A_j} \in S_{A_j}^t$ and $\overline{a}_{A_j} \notin S_{A_j}^t$; or
        \State select $a_{A_j} = \emptyset$ if $\underline{a}_{A_j} \notin S_{A_j}^t$ and $\overline{a}_{A_j} \notin S_{A_j}^t$.
    \ElsIf{$m_{A_j} = W$}
        \State select $a_{A_j} = \underline{a}_{A_j}$ if $\underline{a}_{A_j} \in S_{A_j}^t$; or
        \State select $a_{A_j} = \emptyset$ if $\underline{a}_{A_j} \notin S_{A_j}^t$.
    \ElsIf{$m_{A_j} = D$}
        \State select $a_{A_j} = \emptyset$ if $S_{A_j}^t = \underline{S}_{A_j}$; or
        \State select $a_{A_j} \in S_{A_j}^t \setminus \underline{S}_{A_j}$ uniformly at random.
    \EndIf
    \end{algorithmic}
    \end{algorithm}

    \setcounter{algorithm}{0}
    \renewcommand{\thealgorithm}{2b.\roman{algorithm}}
    \begin{algorithm}[H]
    \caption{Acceptor state update rule}\label{alg:a_update}
    \begin{algorithmic}
    \Require $x_{A_j}$, $a_{A_j}^t$, $u_{A_j}^t$, $S_{A_j}^t$
    \If{$m_{A_j} = C$}
        \State $x_{A_j} = \begin{cases}
            (W, \underline{a}_{A_j}, \underline{u}_{A_j}, S_{A_j}^t, \emptyset, 0) & u_{A_j}^t < \underline{u}_{A_j} \\
            (C, \underline{a}_{A_j}, \underline{u}_{A_j}, S_{A_j}^t, \emptyset, 0) & u_{A_j}^t = \underline{u}_{A_j} \\
            (H, \underline{a}_{P_i}, \underline{u}_{A_j}, S_{A_j}^t, a_{A_j}^t, u_{A_j}^t) & u_{A_j}^t > \underline{u}_{A_j}
        \end{cases}$
    \ElsIf{$m_{A_j} = H$}
        \State $x_{A_j} = \begin{cases}
            (D, \underline{a}_{A_j}, \underline{u}_{A_j}, S_{A_j}^t, \emptyset, 0) & u_{A_j}^t \neq \overline{u}_{A_j} \text{ and } \underline{u}_{A_j} = 0 \\
            (C, \underline{a}_{A_j}, \underline{u}_{A_j}, S_{A_j}^t, \emptyset, 0) & u_{A_j}^t \neq \overline{u}_{A_j} \text{ and } \underline{u}_{A_j} > 0 \\
            (C, \overline{a}_{A_j}, \overline{u}_{A_j}, S_{A_j}^t, \emptyset, 0) & u_{A_j}^t = \overline{u}_{A_j}
        \end{cases}$
    \ElsIf{$m_{A_j} = W$}
        \State $x_{A_j} = \begin{cases}
            (D, \emptyset, 0, \emptyset, \emptyset, 0) & u_{A_j}^t < \underline{u}_{A_j} \\
            (C, \underline{a}_{A_j}, \underline{u}_{A_j}, S_{A_j}^t, \emptyset, 0) & u_{A_j}^t \geq \underline{u}_{A_j} \\
        \end{cases}$
    \ElsIf{$m_{A_j} = D$}
        \State $x_{A_j} = \begin{cases}
            (D, \emptyset, 0, S_{A_j}^t, \emptyset, 0) & u_{A_j}^t = \underline{u}_{A_j} \\
            (H, \emptyset, 0, S_{A_j}^t, a_{A_j}^t, u_{A_j}^t) & u_{A_j}^t > \underline{u}_{A_j}
        \end{cases}$
    \EndIf
    \end{algorithmic}
    \end{algorithm}
\end{figure}

    In the Appendix, we demonstrate that when every proposer follows PTL and every acceptor follows ATL, the stationary distribution of the resulting Markov process converges almost surely to a distribution that supports only stable matchings, which establishes the result.

\section{Exploiting Trial-and-Error Learning}

In this section, we go one step further and evaluate whether the simple learning rules described in the previous section can be exploited. In particular, we consider the setting where the acceptors know that the proposers are using proposer trial-and-error learning, and ask if they can employ an alternate policy that ensures convergence to their most preferred stable matching. In our second contribution, Theorem \ref{thm:aosm}, we show that this kind of exploitation is indeed possible.

\begin{theorem}\label{thm:aosm}
    Consider a matching market $(\PP, \A, \mathbf{U})$, and let $\mu^*$ denote its unique acceptor-optimal stable matching. For every $\delta \in (0, 1]$, there exists $\varepsilon > 0$ and a completely uncoupled policy $\pi_\A^*$ such that if every acceptor follows $\pi_\A^*$ and every proposer follows PTL with experimentation rate $\varepsilon$, then $\mathbb{P}[\mu^t = \mu^*] > 1 - \delta$ for all sufficiently large $t$.
\end{theorem}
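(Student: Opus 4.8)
The plan is to design $\pi_\A^*$ so that the acceptors, who know that every proposer runs PTL, collectively steer the Markov chain toward a single absorbing region corresponding to $\mu^*$. The key structural fact I would exploit is that PTL makes each proposer's behavior transparent: in a ``content'' state a proposer repeats their baseline action $\underline{a}_{P_i}$ except for rare experiments, and crucially the \emph{way} a proposer's state transitions is entirely determined by the utilities it observes, which in turn are determined by whom the acceptors accept. So the acceptors can, in effect, run an internal simulation of the deferred-acceptance dynamics: by selectively accepting and rejecting proposers, an acceptor can feed a proposer exactly the utility signals that push its PTL state in a desired direction. The heart of the construction is therefore an acceptor policy that reproduces the acceptor-proposing deferred-acceptance algorithm (which yields $\mu^*$) as an emergent outcome of the repeated game, using the proposers' experiments as the ``proposals'' that drive that algorithm forward.

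Concretely, I would proceed in the following steps. First, I would specify $\pi_\A^*$ via an augmented finite state that records, for each acceptor, a tentative ``held'' proposer and a set of proposers it has already rejected; the action rule accepts the best available proposer among those currently proposing who have not been permanently rejected, mimicking a round of acceptor-optimal deferred acceptance, and rejects everyone else. Because acceptors cannot observe each other, I would need the dynamics to self-synchronize: when a proposer is rejected by its current target (its PTL state then moves away from ``hopeful/content'' on that acceptor and, via experimentation, it eventually proposes elsewhere), this rejection propagates exactly like a deferred-acceptance rejection chain. Second, I would show that under this coupling the only \emph{recurrent} configuration of the joint chain (for $\varepsilon$ small) is the one in which every proposer is content with baseline action equal to its $\mu^*$-partner and every acceptor holds its $\mu^*$-partner: any other configuration contains either a blocking pair or an unassigned mutually-acceptable pair, and in either case some proposer's experiment (which happens with probability on the order of $\varepsilon$) triggers a rejection chain that strictly advances the deferred-acceptance process, so such configurations are transient. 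Third, I would invoke a perturbed-Markov-chain / stochastic-stability argument (as in the trial-and-error literature the paper cites) to conclude that as $\varepsilon \to 0$ the stationary distribution concentrates on the $\mu^*$-configuration, hence for a suitable fixed $\varepsilon$ and all large $t$, $\mathbb{P}[\mu^t = \mu^*] > 1 - \delta$. Throughout I would verify that $\pi_\A^*$ uses only each acceptor's own history $\{S_{A_j}^\tau, a_{A_j}^\tau, U_{A_j}^\tau\}$, so it is completely uncoupled in the sense of \eqref{eq:acceptor_policy}.

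The main obstacle, I expect, is the synchronization and termination of the emergent deferred-acceptance process without any shared clock or cross-agent observation. In the real deferred-acceptance algorithm there is a well-defined notion of ``rounds'' and the algorithm halts when no rejections occur; here, the acceptors must detect ``no blocking pair remains'' purely from their local streams of proposals, and proposers only discover that a better match exists through low-probability experimentation, so the rejection chains are stochastic and interleaved rather than clean. The delicate part of the proof is to argue that despite this, the process cannot get stuck in a spurious fixed point: I would need a potential/progress argument showing that from any non-$\mu^*$ recurrent-looking state there is a positive-probability finite sequence of experiments and induced rejections that makes monotone progress (e.g., in the lattice of stable matchings, or in a count of ``correctly placed'' proposers), together with the reverse direction that no such escape exists once $\mu^*$ is reached because $\mu^*$ has no blocking pair and PTL content proposers have no incentive to leave. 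Closing this gap — essentially proving that the perturbed chain's unique stochastically stable state is $\mu^*$ — is where the real work lies; the rest reduces to bookkeeping on finite state spaces and a standard appeal to the theory of resistance trees for regular perturbed Markov chains.
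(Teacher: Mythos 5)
Your high-level framing (model the joint dynamics as a regular perturbed Markov chain, characterize its recurrence structure for small $\varepsilon$, and use stochastic stability / resistance trees to single out $\mu^*$) matches the paper's, but the core of the argument --- the mechanism by which the \emph{acceptor-optimal} stable matching, rather than merely \emph{some} stable matching, becomes the unique stochastically stable state --- is missing, and the policy you sketch would not supply it. An acceptor rule that ``accepts the best available proposer among those currently proposing'' and maintains a permanent-rejection set reproduces proposer-proposing deferred acceptance seen from the receiving side; that dynamic, to the extent it selects anything, selects the \emph{proposer}-optimal stable matching, not $\mu^*$, and the acceptors cannot instead run acceptor-proposing deferred acceptance because they cannot propose and do not know their own preferences a priori. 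Moreover, your step-two claim that every non-$\mu^*$ configuration ``contains either a blocking pair or an unassigned mutually-acceptable pair'' is false: every other stable matching has no blocking pair, so under the unperturbed dynamics those states are exactly as absorbing as the $\mu^*$-state, and distinguishing among them is precisely the equilibrium-selection problem. You acknowledge this (``where the real work lies'') but do not provide the argument, so the proof of the theorem's distinctive content is absent.

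The paper closes this gap not by simulating deferred acceptance but by a quantitatively graded acceptance rule: in ATL\textsuperscript{*}, an acceptor who receives a strictly better proposal becomes hopeful only with probability $\varepsilon^{G(u)}$ if currently matched and $\varepsilon^{F(u)}$ if unmatched, where $F$ and $G$ are strictly decreasing with ranges in $(0.5,1)$ and $(0,0.5)$ respectively. This makes the resistance of resolving a blocking pair depend on how much the acceptor gains, so minimum-resistance escape paths from unstable states replicate \emph{acceptor best-response dynamics} (matched acceptors' best blocking pairs resolve first, then unmatched acceptors'). Combined with the fact that acceptor best-response dynamics started from a near-stable perturbation of a stable matching terminate at a stable matching weakly preferred by every acceptor, every minimum-resistance path out of a resistance-$2$ perturbation of $e^*$ leads back to $e^*$, whereas paths out of other stable states can drift acceptor-upward; a tree-surgery comparison then shows any resistance tree rooted at $e \neq e^*$ is strictly heavier than one rooted at $e^*$. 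Without some analogue of this utility-dependent resistance asymmetry --- or another concrete device that breaks the tie among stable matchings in the acceptors' favor --- your construction cannot conclude that $\mu^*$, rather than some other stable matching, is the stochastically stable outcome.
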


\subsection*{Acceptor-Optimal Trial-and-Error Learning Policy}
Like Theorem \ref{thm:sm}, the proof of Theorem \ref{thm:aosm} is constructive. Once again, we state the Theorem as an existence result to emphasize not the policies themselves but rather that equilibrium selection is possible; here, this results from one group adjusting their behavior (in response to the others') to guarantee more favorable asymptotic outcomes. 

Acceptor-optimal trial-and-error learning, or ATL\textsuperscript{*}, is the policy referenced as $\pi_\A^*$ in the Theorem statement. ATL\textsuperscript{*} is formally described by Algorithms \ref{alg:a_act} and \ref{alg:a_update_2}, the latter of which is a mild variant of Algorithm \ref{alg:a_update}. The changes to the algorithm are shown in black; for brevity, we truncate the steps that are the same as in Algorithm \ref{alg:a_update}. Here, the experimentation rate $\varepsilon$ is the same as the one referenced in Algorithm \ref{alg:p_act}, and the functions $F: [0, 1) \to (0.5, 1)$ and $G: [0, 1) \to (0, 0.5)$ are strictly monotonically decreasing.

\renewcommand{\thealgorithm}{2b.\roman{algorithm}}
\begin{figure}[t]
\centering
\begin{algorithm}[H]
\caption{Acceptor-optimal state update rule}\label{alg:a_update_2}
\begin{algorithmic}
    \Require \color{lightgray} $x_{A_j}$, $a_{A_j}^t$, $u_{A_j}^t$, $S_{A_j}^t$, \color{black} $\varepsilon, F, G$
    \If{$m_{A_j} = C$}
        \If{$u_{A_j}^t > \underline{u}_{A_j}$}
            \State $x_{A_j} = \begin{cases}
                (C, \underline{a}_{P_i}, \underline{u}_{A_j}, S_{A_j}^t, \emptyset, 0) & \text{w.p. $1 - \varepsilon^{G(u_{A_j}^t)}$} \\
                (H, \underline{a}_{P_i}, \underline{u}_{A_j}, S_{A_j}^t, a_{A_j}^t, u_{A_j}^t) & \text{w.p. $\varepsilon^{G(u_{A_j}^t)}$}
            \end{cases}$
        \color{lightgray}
        \ElsIf{$u_{A_j}^t \leq \underline{u}_{A_j}$} ...
        \EndIf
    \color{black}
    \ElsIf{$m_{A_j} = D$}
        \If{$u_{A_j}^t > \underline{u}_{A_j}$}
        \State $x_{A_j} = \begin{cases}
            (D, \emptyset, 0, S_{A_j}^t, \emptyset, 0) & \text{w.p. $1 - \varepsilon^{F(u_{A_j}^t)}$} \\
            (H, \emptyset, 0, S_{A_j}^t, a_{A_j}^t, u_{A_j}^t) & \text{w.p. $\varepsilon^{F(u_{A_j}^t)}$}
        \end{cases}$
        \color{lightgray}
        \ElsIf{$u_{A_j}^t = \underline{u}_{A_j}$} ...
        \EndIf
    \ElsIf{$m_{A_j} = H$} ...
    \ElsIf{$m_{A_j} = W$} ...
    \EndIf
\end{algorithmic}
\end{algorithm}
\end{figure}

\color{black}
\vspace{0.2cm}

In the Appendix, we demonstrate that when every proposer follows PTL and every acceptor follows ATL\textsuperscript{*}, the stationary distribution of the resulting Markov process converges almost surely to a distribution that supports only $\mu^*$.


\section{Conclusion}

In this paper, we studied two-sided decentralized matching markets in which neither the proposers nor acceptors knew their own preferences over the opposite side of the market. We presented a model of matching interactions where proposers and acceptors learned about one another over time through limited observations. In our first contribution, we demonstrated that completely uncoupled policies could ensure probabilistic convergence to stable matchings. Then, we demonstrated that the proposer policy was susceptible to exploitation by an alternate acceptor policy that guaranteed convergence to the acceptor-optimal stable matching.

This work introduced novel guarantees for convergence to stability in two-sided markets with unknown preferences. We emphasize that the primary value of our results lies in the guarantees provided by the policies, not the policies themselves. We make no claims about their practical use and instead highlight that they are simple and ensure convergence to stability. We believe such fundamental results can motivate the development of more advanced techniques for achieving stability in real-world markets. Finally, we remark that our second result raises an interesting question about learning in games. Specifically, we showed that in two-sided matching markets, one group of learning agents can improve their outcome by exploiting the policy of the other group. Understanding when exploitation is possible in other non-cooperative games is an exciting avenue for future research.


\bibliographystyle{ieeetr}
\bibliography{references}

\begin{thebibliography}{10}

\bibitem{gale1962college}
D.~Gale and L.~S. Shapley, ``{College Admissions and the Stability of Marriage},'' {\em The American Mathematical Monthly}, vol.~69, no.~1, pp.~9--15, 1962.

\bibitem{vate1989linear}
J.~H.~V. Vate, ``Linear programming brings marital bliss,'' {\em Operations Research Letters}, vol.~8, no.~3, pp.~147--153, 1989.

\bibitem{roth1993stable}
A.~E. Roth, U.~G. Rothblum, and J.~H. Vande~Vate, ``{Stable Matchings, Optimal Assignments, and Linear Programming},'' {\em Mathematics of operations research}, vol.~18, no.~4, pp.~803--828, 1993.

\bibitem{roth1997effects}
A.~E. Roth and E.~Peranson, ``{The Effects of the Change in the NRMP Matching Algorithm},'' {\em JAMA}, vol.~278, no.~9, pp.~729--732, 1997.

\bibitem{roth1990random}
A.~E. Roth and J.~H.~V. Vate, ``{Random Paths to Stability in Two-Sided Matching},'' {\em Econometrica: Journal of the Econometric Society}, pp.~1475--1480, 1990.

\bibitem{blum1997vacancy}
Y.~Blum, A.~E. Roth, and U.~G. Rothblum, ``{Vacancy Chains and Equilibration in Senior-Level Labor Markets},'' {\em Journal of Economic theory}, vol.~76, no.~2, pp.~362--411, 1997.

\bibitem{ackermann2008uncoordinated}
H.~Ackermann, P.~W. Goldberg, V.~S. Mirrokni, H.~R{\"o}glin, and B.~V{\"o}cking, ``{Uncoordinated Two-Sided Matching Markets},'' in {\em Proceedings of the 9th ACM Conference on Electronic Commerce}, pp.~256--263, 2008.

\bibitem{lai1985asymptotically}
T.~L. Lai and H.~Robbins, ``{Asymptotically Efficient Adaptive Allocation Rules},'' {\em Advances in applied mathematics}, vol.~6, no.~1, pp.~4--22, 1985.

\bibitem{bubeck2012regret}
S.~Bubeck, N.~Cesa-Bianchi, {\em et~al.}, ``{Regret Analysis of Stochastic and Nonstochastic Multi-armed Bandit Problems},'' {\em Foundations and Trends{\textregistered} in Machine Learning}, vol.~5, no.~1, pp.~1--122, 2012.

\bibitem{liu2020competing}
L.~T. Liu, H.~Mania, and M.~Jordan, ``{Competing Bandits in Matching Markets},'' in {\em International Conference on Artificial Intelligence and Statistics}, pp.~1618--1628, PMLR, 2020.

\bibitem{liu2021bandit}
L.~T. Liu, F.~Ruan, H.~Mania, and M.~I. Jordan, ``{Bandit Learning in Decentralized Matching Markets},'' {\em Journal of Machine Learning Research}, vol.~22, no.~211, pp.~1--34, 2021.

\bibitem{basu2021beyond}
S.~Basu, K.~A. Sankararaman, and A.~Sankararaman, ``{Beyond $\log^2(T)$ Regret for Decentralized Bandits in Matching Markets},'' in {\em International Conference on Machine Learning}, pp.~705--715, PMLR, 2021.

\bibitem{maheshwari2022decentralized}
C.~Maheshwari, S.~Sastry, and E.~Mazumdar, ``{Decentralized, Communication- and Coordination-free Learning in Structured Matching Markets},'' {\em Advances in Neural Information Processing Systems}, vol.~35, pp.~15081--15092, 2022.

\bibitem{kong2023player}
F.~Kong and S.~Li, ``{Player-optimal Stable Regret for Bandit Learning in Matching Markets},'' in {\em Proceedings of the 2023 Annual ACM-SIAM Symposium on Discrete Algorithms (SODA)}, pp.~1512--1522, SIAM, 2023.

\bibitem{hosseini2024putting}
H.~Hosseini, S.~Roy, and D.~Zhang, ``{Putting Gale \& Shapley to Work: Guaranteeing Stability Through Learning},'' {\em arXiv preprint arXiv:2410.04376}, 2024.

\bibitem{das2005two}
S.~Das and E.~Kamenica, ``{Two-Sided Bandits and the Dating Market},'' in {\em IJCAI}, vol.~5, p.~19, Citeseer, 2005.

\bibitem{pagare2023two}
T.~Pagare and A.~Ghosh, ``{Two-Sided Bandit Learning in Fully-Decentralized Matching Markets},'' in {\em ICML 2023 Workshop The Many Facets of Preference-Based Learning}, 2023.

\bibitem{pokharel2023converging}
G.~Pokharel and S.~Das, ``{Converging to Stability in Two-Sided Bandits: The Case of Unknown Preferences on Both Sides of a Matching Market},'' {\em arXiv preprint arXiv:2302.06176}, 2023.

\bibitem{young1993evolution}
H.~P. Young, ``{The Evolution of Conventions},'' {\em Econometrica: Journal of the Econometric Society}, pp.~57--84, 1993.

\bibitem{marden2009payoff}
J.~R. Marden, H.~P. Young, G.~Arslan, and J.~S. Shamma, ``{Payoff-Based Dynamics for Multiplayer Weakly Acyclic Games},'' {\em SIAM Journal on Control and Optimization}, vol.~48, no.~1, pp.~373--396, 2009.

\bibitem{pradelski2012learning}
B.~S. Pradelski and H.~P. Young, ``{Learning efficient Nash equilibria in distributed systems},'' {\em Games and Economic behavior}, vol.~75, no.~2, pp.~882--897, 2012.

\bibitem{marden2014achieving}
J.~R. Marden, H.~P. Young, and L.~Y. Pao, ``{Achieving Pareto Optimality Through Distributed Learning},'' {\em SIAM Journal on Control and Optimization}, vol.~52, no.~5, pp.~2753--2770, 2014.

\bibitem{etesami2024decentralized}
S.~R. Etesami and R.~Srikant, ``{Decentralized and Uncoordinated Learning of Stable Matchings: A Game-Theoretic Approach},'' {\em arXiv preprint arXiv:2407.21294}, 2024.

\bibitem{shah2024learning}
V.~Shah, B.~L. Ferguson, and J.~R. Marden, ``{Learning Optimal Stable Matches in Decentralized Markets with Unknown Preferences},'' in {\em 2024 IEEE 63rd Conference on Decision and Control (CDC)}, pp.~787--792, 2024.

\end{thebibliography}


\appendix

In this section, we provide proofs of Theorems \ref{thm:sm} and \ref{thm:aosm}. These proofs rely on the theory of \emph{regular perturbed Markov processes}, which we describe briefly below.

\subsection{Regular perturbed Markov processes}

Let $\T^0$ denote the probability transition matrix over a finite state space $\X$. Define $\T^\varepsilon$ as a \lq perturbed' version of $T^0$, where the size of the perturbations are indexed by a scalar $\epsilon > 0$, and let $\T_{x, x'}^\varepsilon$ denote the probability of the transition from $x$ to $x'$. We say that $\T^\epsilon$ is a \emph{perturbed Markov process} if
\begin{enumerate}
    \item $\T^\varepsilon$ is ergodic and aperiodic
    \item $\T_{x, x'}^\varepsilon \to \T_{x, x'}^0$ as $\varepsilon \to 0$ for all $x, x' \in \X$
    \item $\T_{x, x'}^\varepsilon > 0$ for some $\epsilon > 0$ implies that there exists a unique constaint $r(x, x')$ such that $0 < \frac{\T_{x, x'}^\varepsilon}{r(x, x')} < \infty$
\end{enumerate}
We refer to the unique constant $r(x, x')$ that satisfies this third condition as the \emph{resistance} of the transition from $x$ to $x'$. If $\T_{x,x'}^\varepsilon = 0$ for any $\varepsilon > 0$, then $r(x, x') = \infty$, and if $\T_{x, x'}^0 > 0$, then $r(x, x') = 0$.

Any state that is contained in the support of the stationary distribution $\pi^\varepsilon$ of $\T^\varepsilon$ as $\varepsilon \to 0$ is referred to as \emph{stochastically stable}. Equivalently, if we describe the state of a random walk defined by this Markov process at timestep $t$ as $x^t$, then we can say that for every $\delta > 0$, there exist $\epsilon > 0$ and $T \gg 0$ such that $x^t \in \X'$ for all $t > T$ with probability at least $1 - \delta$, where $\X' \subset \X$ denotes the set of all stochastically stable states.

Construct a weighted directed graph $\G$ whose nodes correspond to the recurrence classes of the unperturbed process $\T^0$, and whose edge weights are given by the resistance of the transition between the two nodes they span. A path $x^1, \dots, x^k$ is a sequence of nodes successively connected to one another by directed edges. An \emph{in-tree} $T(x)$ rooted at a node $x$ is a subgraph of $\G$ such that for every node $x'$, there is exactly one path from $x'$ to $x$ in $T(x)$. The \emph{weight} of an in-tree $T(x)$ is the sum of the weights of all of the edges in $T(x)$, and $T^*(x)$ is the weight of the minimum-weight in-tree over all in-trees rooted at $x$.

Equipped with these definitions, we can now introduce a simple criterion for identifying the stochastically stable states of a process \cite{young1993evolution}:

\begin{theorem*}[Young, 1993 \cite{young1993evolution}]
    Let $\T^0$ be a Markov process over a state space $\X$, let $\T^\varepsilon$ be a regular perturbation of $\T^0$, and let $\pi^\varepsilon$ be the stationary distribution over $\X$ for a fixed $\varepsilon$. Then, as $\epsilon \to 0$, $\pi^\varepsilon$ converges to a stationary distribution $\pi^0$ of $\T^0$, and $x \in \X$ is stochastically stable if and only if $x \in \argmin_{x' \in \X} T^*(x')$.
\end{theorem*}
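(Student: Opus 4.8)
The plan is to reduce the statement to the \emph{Markov chain tree theorem}, which expresses the stationary distribution of an ergodic finite-state chain as a ratio of spanning in-tree weights, and then to read off the leading-order behavior of each coordinate as $\varepsilon \to 0$ using the resistances. First I would use ergodicity and aperiodicity (condition 1) to conclude that for each fixed $\varepsilon > 0$ the process $\T^\varepsilon$ has a unique stationary distribution $\pi^\varepsilon$, and apply the tree theorem over the individual states $\X$ to write
\begin{equation*}
\pi^\varepsilon(x) = \frac{q^\varepsilon(x)}{\sum_{x' \in \X} q^\varepsilon(x')}, \qquad q^\varepsilon(x) \triangleq \sum_{T(x)} \, \prod_{(y, z) \in T(x)} \T_{y,z}^\varepsilon,
\end{equation*}
where the outer sum ranges over all in-trees rooted at $x$ in the complete directed graph on $\X$ and the product is over the edges of each such tree.

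Next I would track the order in $\varepsilon$. The resistance condition (condition 3) gives $\T_{y,z}^\varepsilon \asymp \varepsilon^{r(y,z)}$ for every edge of positive probability, so the weight of a tree $T(x)$ is of order $\varepsilon^{R(T(x))}$ with $R(T(x)) \triangleq \sum_{(y,z) \in T(x)} r(y,z)$, and the finite sum $q^\varepsilon(x)$ is dominated by its slowest-vanishing term, giving $q^\varepsilon(x) \asymp \varepsilon^{\rho(x)}$ with $\rho(x) \triangleq \min_{T(x)} R(T(x))$. I would then perform the reduction that justifies working on the graph $\G$: any edge internal to a recurrence class of $\T^0$ has $\T_{y,z}^0 > 0$ and hence zero resistance, so states inside a class can be routed to one another at no cost. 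This collapses the individual-state minimization $\rho(x)$ to the recurrence-class in-tree minimization on $\G$, showing $\rho(x) = T^*(E)$ for the class $E$ containing $x$, a value common to all states of $E$.

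Combining these estimates yields $\pi^\varepsilon(x) \asymp \varepsilon^{\,T^*(E) - \min_{x'} T^*(x')}$, whose limit as $\varepsilon \to 0$ is strictly positive exactly when $T^*(E) = \min_{x'} T^*(x')$ and is zero otherwise; this identifies the support of the limiting distribution with $\argmin_{x' \in \X} T^*(x')$ and establishes the stochastic-stability characterization. Finally, the coordinatewise limits just computed show that $\pi^0 \triangleq \lim_{\varepsilon \to 0} \pi^\varepsilon$ exists, and passing to the limit in the identity $\pi^\varepsilon \T^\varepsilon = \pi^\varepsilon$, using the entrywise convergence $\T^\varepsilon \to \T^0$ (condition 2), gives $\pi^0 \T^0 = \pi^0$, so $\pi^0$ is a stationary distribution of $\T^0$.

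I expect the main obstacle to be the recurrence-class reduction: the tree theorem is naturally stated over individual states, whereas $\G$ and $T^*$ are defined over recurrence classes, so I must argue carefully that the zero-resistance internal edges let me collapse each class to a single node without changing the leading exponent, and that \emph{every} state of a minimizing class (not merely one representative) inherits a positive limiting probability from the class's conditional stationary distribution. The tree theorem itself I would cite as a black box, after which the order-of-magnitude bookkeeping is routine.
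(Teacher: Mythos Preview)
The paper does not prove this theorem; it is quoted from Young~\cite{young1993evolution} and used as a black box in the proofs of Theorems~\ref{thm:sm} and~\ref{thm:aosm}. There is therefore nothing in the paper to compare your proposal against. For what it is worth, your outline---the Markov chain tree formula for $\pi^\varepsilon$, followed by leading-order bookkeeping via the resistance condition and a collapse from individual states to recurrence classes---is the standard route and is essentially how Young's original argument proceeds; the paper even alludes to this reduction in the remark immediately following the theorem statement, noting that working over all of $\X$ rather than over recurrence classes does not change the conclusion (citing Lemma~1 of~\cite{young1993evolution}).
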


Note that in the proofs below, we construct the graph $\G$ over all of the states $x \in \X$, not just the recurrence classes of $\T^0$, but this does not affect the conclusions (see Lemma 1 in \cite{young1993evolution}).

\newcounter{mycounter}
\setcounter{mycounter}{1}
\newtheorem{claim}{Claim}[mycounter]
\subsection{Proof of Theorem \ref{thm:sm}}

For a given $\varepsilon$, proposer trial-and-error learning defines a regular perturbed Markov process over the state space $\X$, the set of all possible states $x \triangleq \{ x_{P_1}, \dots, x_{P_n}, x_{A_1}, \dots, x_{A_m} \}$. Our goal is to identify the \emph{stochastically stable} states of this process, i.e., the recurrence classes that are visited infinitely often in the limiting stationary distribution induced by the policy as $\varepsilon \to 0$.

We begin by identifying the recurrence classes of the unperturbed Markov process, i.e., the recurrence classes of the process over $\X$ when $\varepsilon = 0$. Let $\Z$ denote the subset of states in which every proposer is either content or discontent. 

\begin{claim}\label{claim:thm1_rec}
    Every state $z \in \Z$ is a singleton recurrence class of the unperturbed process.
\end{claim}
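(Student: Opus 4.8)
The plan is to show that each $z\in\Z$ is an \emph{absorbing} state of the unperturbed chain, i.e.\ $\T^0_{z,z}=1$; since a singleton absorbing state automatically forms a closed communicating class, it is then a singleton recurrence class and we are done. Concretely, I would verify that from $z$ the agents play a deterministic action profile $a(z)$ and that one round of the action/update rules returns the state $z$ unchanged.

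\emph{Step 1: the played profile is deterministic.} Setting $\varepsilon=0$ in Algorithm~\ref{alg:p_act}, every content proposer proposes to its baseline acceptor $\underline a_{P_i}$ with probability one, and every discontent proposer (for which $\underline a_{P_i}=\emptyset$) proposes to no one; since by definition of $\Z$ every proposer is content or discontent, the proposal set $S_{A_j}$ seen by each acceptor is determined by $z$. I would then argue that Algorithm~\ref{alg:a_act} is also deterministic on these inputs: its only randomized branches are the ``choose uniformly from $S_{A_j}^t\setminus\underline S_{A_j}$'' clauses in moods $C$ and $D$, and in a state of $\Z$ the recorded baseline proposal set $\underline S_{A_j}$ must agree with the set of proposers currently pointing at $A_j$, so those clauses are never invoked. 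This step carries the real content of the claim --- it says that the bookkeeping variables stored in a state of $\Z$ are mutually consistent --- and I expect it to be the main obstacle; I would settle it by a short case analysis over the acceptor moods, using that a content acceptor satisfies $\underline a_{A_j}\in\underline S_{A_j}$ and that $U_{A_j}(\emptyset)=0<U_{A_j}(P)$ for $P\in\PP$ (and symmetrically for proposers), which forces any matched content agent's recorded utility to be positive and hence forces its partner to reciprocate rather than leave it unmatched.

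\emph{Step 2: the profile reproduces $z$.} Granting Step~1, the induced matching $\mu(a(z))$ pairs each content proposer with its baseline acceptor and vice versa, and leaves every discontent agent unmatched. Hence each content agent observes exactly its recorded baseline utility $\underline u$, so the $m=C$ branch of Algorithm~\ref{alg:p_update} (resp.\ Algorithm~\ref{alg:a_update}) --- the ``$u^t=\underline u$'' case --- returns the same content state, including the unchanged baseline proposal set for acceptors; and each discontent agent observes $0=\underline u$ with $S_{A_j}^t=\underline S_{A_j}$, so the $m=D$ branch returns the same discontent state. Thus every component of the state is fixed, so $\T^0_{z,z}=1$.

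\emph{Step 3: conclude.} An absorbing state is its own closed communicating class, hence a recurrence class of $\T^0$, and it is a singleton by construction. Since $z\in\Z$ was arbitrary, every state in $\Z$ is a singleton recurrence class, which is the claim.
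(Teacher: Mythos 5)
Your overall route is the paper's: show that each $z\in\Z$ is absorbing under $\T^0$, hence a singleton recurrence class. The paper asserts this in one line (``every agent simply repeats their baseline action and receives their baseline utility''), and you correctly identify that the hidden content is the mutual consistency of the stored variables. The problem is that your Step 1, as sketched, cannot be carried out: $\Z$ is defined only by a restriction on moods, while the global state space $\X$ is a product of \emph{independent} individual states, so nothing in the definition of $\Z$ ties one agent's bookkeeping to another's. Your key inference --- that a content acceptor's positive baseline utility ``forces its partner to reciprocate'' --- does not follow from any of the algorithms or from membership in $\Z$. Concretely, take $A_j$ content with $\underline a_{A_j}=P_1$, $\underline S_{A_j}=\{P_1\}$, $\underline u_{A_j}=U_{A_j}(P_1)>0$, while $P_1$ is content with $\underline a_{P_1}=A_k\neq A_j$; every proposer is content or discontent, so this state lies in $\Z$ as defined, yet at $\varepsilon=0$ the acceptor $A_j$ receives no proposals, gets utility $0<\underline u_{A_j}$, and becomes watchful --- the state is not absorbing. (A similar issue arises if $\underline S_{A_j}$ merely fails to equal the exact set of proposers whose baseline is $A_j$, since the content-branch update overwrites $\underline S_{A_j}$ with $S^t_{A_j}$.) So the consistency you need is not provable from the moods-only definition; it must instead be built into the meaning of $\Z$ (baselines aligned with a common matching and with the realized utilities, which is what the paper implicitly assumes) or obtained by restricting attention to states reachable under the dynamics, where these alignment invariants can be checked to be preserved. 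With that repair, your Steps 1--3 go through and coincide with the first half of the paper's argument.

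One further difference in scope: the paper's proof of this claim also shows that from every state $x\notin\Z$ there is a probability-one, one-step transition into $\Z$, i.e., that the states of $\Z$ are the \emph{only} recurrence classes. This is not required by the literal statement you proved, but it is used downstream (in the resistance-tree arguments), so omitting it leaves work that the paper packages into this claim.
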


\begin{proof}
    To see this, first observe that any transition out of the state $z \in \Z$ is not possible when $\varepsilon = 0$, since every agent simply repeats their baseline action and receive their baseline utility, causing their state to remain exactly the same. Next, we establish that from every state $x \notin \Z$, there is a transition to a state $z \in \Z$ that occurs with probability 1. Consider a state $x$ in which at least one agent is hopeful or watchful. When in $x$, every discontent, watchful, and content agent plays their baseline action, every hopeful proposer either plays their baseline or trial action, and every hopeful acceptor either plays their baseline, trial, or rejection action. Regardless of the outcome, every hopeful or watchful agent becomes either content or discontent, and their baseline utilities become aligned with the baseline action profiles, resulting in a one-step transition to a state in $\Z$ with probability 1.
\end{proof}

Now, construct a graph $\G$ whose nodes represent each of the states in $\X$. The directed edge from $x$ to $x'$ has weight given by the resistance $r(x, x')$ as described in the previous section. Our goal now is to identify the root nodes of the minimum weight in-trees on the graph $\G$. Let $\E \subset \Z$ denote the subset of states where the baseline action profile $\underline{a}$ corresponds to a stable matching. 

\begin{claim}\label{claim:thm1_w1}
    The minimum of the weights of the out-edges from a node $z \in \Z \setminus \E$ is exactly 1.
\end{claim}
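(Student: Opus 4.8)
The plan is to establish the claim in two directions: (a) exhibit a one-step transition out of $z$ with resistance exactly $1$, and (b) argue that no transition out of $z$ can have resistance $0$. For direction (b), observe that from any $z \in \Z$ every proposer is content or discontent, and under the unperturbed dynamics ($\varepsilon = 0$) each such proposer replays their baseline action and the state is absorbing (this is exactly Claim~1). Hence every transition leaving $z$ requires at least one proposer to deviate from their baseline action, which under Algorithm~\ref{alg:p_act} happens only through the experimentation branches (probability $\varepsilon$ for a uniform random proposal, $\varepsilon^2$ for proposing $\emptyset$). So every out-edge has resistance at least $1$, and it remains to show resistance exactly $1$ is achievable, i.e. a single proposer's single experimentation suffices to move to a new recurrence class.

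For direction (a): since $z \in \Z\setminus\E$, the baseline action profile $\underline a$ does not correspond to a stable matching, so either (i) some proposer's baseline action is unreciprocated (the acceptor they name does not name them back under ATL), or (ii) the induced matching $\mu(\underline a)$ is unstable, i.e.\ admits a blocking pair $(P_i, A_j)$ with $A_j \succ_{P_i} \mu_{P_i}(\underline a)$ and $P_i \succ_{A_j} \mu_{A_j}(\underline a)$. In case (ii), let $P_i$ be a blocking proposer and consider the single-proposer experiment where $P_i$ proposes to $A_j$ (resistance $1$) while every other proposer replays their baseline. One must then trace the ATL response: because $A_j$ is content at $z$ and $P_i$ gives $A_j$ strictly higher utility than $A_j$'s baseline partner $\mu_{A_j}(\underline a)$, the acceptor update rule (Algorithm~\ref{alg:a_update}, content case with $u_{A_j}^t > \underline u_{A_j}$) sends $A_j$ to the hopeful mood with trial action $P_i$; meanwhile $P_i$, having received $u_{P_i}^t = U_{P_i}(A_j) > U_{P_i}(\mu_{P_i}(\underline a)) = \underline u_{P_i}$, transitions to hopeful as well (Algorithm~\ref{alg:p_update}, content case, $a_{P_i}^t \neq \underline a_{P_i}$, $u_{P_i}^t > \underline u_{P_i}$). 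Since at least one agent is now hopeful, the resulting state is not in $\Z$, so by Claim~1 the process proceeds in one further step (resistance $0$) to some state in $\Z$, and one checks this landing state differs from $z$ (the newly hopeful agents either lock in their improved trial as a new baseline, or become discontent — either way the state changes). Chaining the resistance-$1$ step with resistance-$0$ steps gives a path from $z$ to a different recurrence class of total resistance $1$, hence an out-edge of weight $1$ in $\G$ (contracting the resistance-$0$ portion). Case (i), where $\mu(\underline a)$ leaves some proposer effectively unmatched while a mutually-improving pair exists, is handled analogously by having that proposer (or a blocking proposer) experiment toward the relevant acceptor.

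I expect the main obstacle to be the bookkeeping in direction (a): one must verify that after the single $\varepsilon$-experiment, the subsequent deterministic (resistance-$0$) relaxation back into $\Z$ genuinely lands at a state $z' \neq z$, rather than returning to $z$ itself — this requires carefully reading off the hopeful/watchful/discontent updates for \emph{both} $P_i$ and $A_j$ (and checking no other agent's mood was disturbed, which holds since all others replayed baselines and the only changed reciprocation involves $P_i$ and $A_j$ and possibly $\mu_{A_j}(\underline a)$, who loses their partner and may go watchful). A secondary subtlety is confirming that when $z\in\Z\setminus\E$ fails stability only through an \emph{unreciprocated baseline} (an agent whose $\underline a$ names a partner who does not name back), such a $z$ is still reachable/relevant and the same experimentation argument applies; alternatively one may note these pathological $z$ can be excluded or absorbed into the analysis of $\E$ without affecting the minimum-weight in-tree computation.
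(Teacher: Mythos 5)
Your proposal is correct and follows essentially the same route as the paper: the lower bound comes from noting that any exit from an absorbing state $z\in\Z$ requires at least one proposer experimentation, and the upper bound comes from having a blocking-pair proposer experiment (resistance $1$), the acceptor tentatively accept, and the system relax back into $\Z$ at a different state via zero-resistance steps. Your extra bookkeeping (checking the landing state differs from $z$, and flagging states with unreciprocated baselines, which the paper implicitly excludes since such states exit $\Z$ at zero resistance and are not recurrence classes) is more careful than, but consistent with, the paper's argument; the only slight imprecision is that the content acceptor accepts a new proposer regardless of utility, with the utility comparison entering only in the state update, which does not affect your conclusion.
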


\begin{proof}
    Since the baseline action profile $\underline{a} \in z$ corresponds to an unstable matching, there must be a blocking pair $(P_i, A_j)$. If $P_i$ experimented and proposed to $A_j$ (which occurs with probability $\varepsilon$), then $A_j$ would accept $P_i$, making $P_i$ hopeful and leading to a state in $\X$; in the following timestep, $P_i$ would propose once more to $A_j$, who would also accept them once more, leading to another state in $\Z$ with probability 1. Hence, the total weight along this path is simply 1 + 0 = 1. Furthermore, any transition out of a state $z$ must involve at least one experimentation, so the minimum weight of an outgoing edge must be at least 1, thus completing the proof.
\end{proof}

Next, we recall an important result from \cite{roth1990random}, which states the following: From any unstable matching $\mu^t$, there exists a finite sequence of matchings $\mu^{t+1}, \dots, \mu^{t + k}$ such that for all $l \in \{0, \dots, k - 1\}$, $\mu^{t + l + 1}$ is obtained from $\mu^{t + l}$ by resolving a single blocking pair $(P_{i_l}, A_{j_l})$. Combined with Claim \ref{claim:thm1_w1}, these results imply that for every node $z \in \Z \setminus \E$ in the graph $G$ that represents an unstable matching, there exists a path to a node $e \in \E$ where each edge along the path alternates between having weight 1 (when a proposer first experiments) or weight 0 (when a proposer repeats their hopeful action after experimenting).

Finally, we establish the following:
\begin{claim}\label{claim:thm1_w2}
    The total weight of a path from a node $e \in \E$ to another node $z \notin \E$ must be at least 2.
\end{claim}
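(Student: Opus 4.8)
The plan is to show that escaping the set $\E$ of stable-matching states requires accumulating at least $2$ units of resistance along any path. The key observation is that resistance is incurred only when a proposer \emph{experiments} (probability $\varepsilon$, resistance $1$) or \emph{unilaterally drops out} (probability $\varepsilon^2$, resistance $2$); repeating a baseline or trial action carries zero resistance. So I would argue that a single experimentation cannot move the system from a stable-matching state $e \in \E$ to any state $z \notin \E$, which forces at least two experimentations (or one drop-out) along the path, giving total weight at least $2$.

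First I would characterize what happens when exactly one proposer, say $P_i$, deviates from a stable-matching state $e$. If $P_i$ experiments and proposes to some acceptor $A_j$ other than $\underline{a}_{P_i} = \mu_{P_i}$, then since $\underline{a}$ is a stable matching, $A_j$'s assigned partner $\mu_{A_j}$ satisfies $\mu_{A_j} \succ_{A_j} P_i$ (otherwise $(P_i, A_j)$ would block $\mu$); moreover every other proposer is playing their baseline action, so $A_j$ receives the proposal from $\mu_{A_j}$ as well and, being content with $\underline{a}_{A_j} = \mu_{A_j} \in S_{A_j}^t$, accepts $\mu_{A_j}$ and rejects $P_i$. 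Hence $P_i$ gets utility $0 \le \underline{u}_{P_i}$ and transitions back to content with the same baseline (by Algorithm \ref{alg:p_update}, content case, $a_{P_i}^t \neq \underline{a}_{P_i}$, $u_{P_i}^t \le \underline{u}_{P_i}$). Meanwhile $A_j$ stays content on $\mu_{A_j}$ (it played and received its baseline), and every other agent is unperturbed. So after one more zero-resistance step the system returns to $e$ itself — no escape. The drop-out case ($a_{P_i}^t = \emptyset$) similarly sends $P_i$ to watchful then back to content, and has resistance $2$ anyway.

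Next I would note that the only remaining single-experimentation scenario is $P_i$ experimenting to an acceptor $A_j$ who would actually accept, i.e. $P_i \succ_{A_j} \mu_{A_j}$; but then $(P_i, A_j)$ is a blocking pair for the stable matching $\mu$, a contradiction. Therefore from $e \in \E$, after any single unit of experimentation resistance the process deterministically returns (via zero-resistance steps through the hopeful/watchful/content transitions established in Claim \ref{claim:thm1_rec}) to $\E$ — in fact to a state in $\Z$ whose baseline is still a stable matching. Consequently any path from $e$ to a state $z \notin \E$ cannot consist of one resistance-$1$ edge followed by resistance-$0$ edges; it must contain either a second experimentation edge or a resistance-$2$ drop-out edge, so its total weight is at least $2$.

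The main obstacle is the bookkeeping in the second paragraph: I must verify that every agent's state update genuinely returns to the \emph{same} stable-matching configuration $e$ (or at worst another state in $\E$), carefully tracking the hopeful proposers and acceptors that may be created transiently and confirming, via the update rules in Algorithm \ref{alg:p_update} and Algorithm \ref{alg:a_update}, that all baseline actions and utilities stay consistent with a stable matching. A subtlety worth flagging is simultaneous experimentation by two proposers in a single step — but that already costs resistance $2$, so it does not threaten the bound; the delicate case is truly the lone experimenter, and showing that the one rejected experimenter cannot, through the ensuing deterministic relaxation, nudge any \emph{other} agent's baseline is the crux.
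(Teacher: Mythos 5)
There is a genuine gap, and it sits exactly where you flagged the ``crux'': your model of how a content acceptor responds to a lone experimenter contradicts the ATL policy and the informational assumptions of the paper. You argue that when $P_i$ experiments and proposes to $A_j$, the acceptor $A_j$ ``accepts $\mu_{A_j}$ and rejects $P_i$'' because $\mu_{A_j} \succ_{A_j} P_i$. But acceptors do not know their preferences, and Algorithm \ref{alg:a_act} does not let a content acceptor fall back on its baseline when a new proposer appears: the branch ``select $\underline{a}_{A_j}$'' requires $S_{A_j}^t \subseteq \underline{S}_{A_j}$, which fails precisely when $P_i \notin \underline{S}_{A_j}$ proposes; the applicable branch is the third one, so $A_j$ \emph{accepts the newcomer} $P_i$ (uniformly among new proposers), regardless of utilities. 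Consequently a single experimentation really does move the process out of $\Z$ --- the jilted partner $\mu_{A_j}$ becomes watchful, $P_i$'s abandoned partner $\mu_{P_i}$ becomes watchful, and $P_i$ or $A_j$ may become hopeful --- and your claim that ``the system returns to $e$ itself --- no escape'' after one zero-resistance step, with no other agent's state touched, is false as argued. The paper's proof uses the opposite fact (``the acceptors always accept any new proposal they receive''), and then invokes stability of $\underline{a}$ to show that at least one member of the tentative pair prefers its old partner, so the ensuing zero-resistance hopeful/watchful relaxation collapses back to $e$; that bookkeeping through Algorithms \ref{alg:p_update} and \ref{alg:a_update} is the real content of the claim, and it is exactly the part your write-up replaces with behavior the policy does not have.

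A second, purely logical error compounds this: you assert that stability of $\mu$ forces $\mu_{A_j} \succ_{A_j} P_i$ (``otherwise $(P_i, A_j)$ would block $\mu$''), and later that any acceptor who ``would actually accept'' yields a blocking pair. Stability only rules out \emph{both} $A_j \succ_{P_i} \mu_{P_i}$ and $P_i \succ_{A_j} \mu_{A_j}$ holding simultaneously; it is perfectly consistent with $P_i \succ_{A_j} \mu_{A_j}$ while $\mu_{P_i} \succ_{P_i} A_j$. So even within a preference-aware acceptor model your case analysis would not cover all single-experimentation events. The high-level skeleton of your argument (resistance is only incurred by experimentation or drop-out, a lone experimentation from $e$ is absorbed at zero resistance, hence leaving $\E$ costs at least $2$) matches the paper, but the justification of the absorption step needs to be redone using the actual ATL acceptance rule and the weaker, correct consequence of stability.
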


\begin{proof}
    Since the acceptors always accept any new proposal they receive, a unilateral proposer experimentation in $e$ leads to a state $x \in \X \setminus \Z$ in which at least one agent is watchful or hopeful. However, since the original baseline action $\underline{a}$ is stable, at least one of the two agents must prefer their previous partner to one another. In either case, there is a zero-resistance path that leads from $x$ back to $e$ in at most two transitions. Hence, any such transition out of $e$ must have resistance strictly greater than 1. There are three possible transitions out of a stable matching that occur with resistance 2. First, two proposers can experiment simultaneously and cause a transition to another matching (e.g., by swapping partners); from Algorithm \ref{alg:a_act}, the acceptors tentatively accept the new proposers, leading to a state $x \in \X$; then, depending on their preferences, this leads to a new state $z \in \Z \setminus \E$ (where $z \neq e)$ in which they are either content or discontent. Second, two proposers can experiment and propose to the same acceptor $A_j$ in succession, causing the content proposer $\mu_{A_j}$ to become discontent; the procedure is similar as the one described above. Third, any content proposer $P_i$ can play $a_{P_i} = \emptyset$; this will cause their partner $\mu_{P_i}$ to become watchful then discontent, causing a two-timestep transition to a state $z \in \Z \setminus \E$. While the first two transitions may not always possible (for example, in a stable matching where every proposer is partnered with their unique top choice acceptor), the third transition is always possible and occurs with resistance exactly equal to 2.
\end{proof}

The previous claims establish that the minimum weight of an out-edge from a state $x \in \X \setminus \Z$ is 0, the minimum weight of an out-edge from a state $z \in \Z \setminus \E$ is 1, and the minimum weight of a path from a state $e \in \E$ to a state $z \in \Z \setminus \E$ is 2. We complete the proof by establishing the following:

\begin{claim}\label{claim:thm1_tree}
    A minimum weight in-tree on the graph $\G$ must be rooted at a state $e \in \E$.
\end{claim}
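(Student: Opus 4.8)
The plan is to combine Young's theorem with the standard \emph{tree-surgery} technique: assume, for contradiction, that some minimum-weight in-tree $T$ on $\G$ is rooted at a state $x^* \notin \E$, and build from $T$ an in-tree of strictly smaller weight. The elementary move I will use repeatedly is: given an in-tree rooted at $r$ and a simple path $\mathcal P$ from $r$ to $r'$ in $\G$, redirecting the out-edge of every node of $\mathcal P$ other than $r'$ to follow $\mathcal P$ and deleting the out-edge of $r'$ yields an in-tree rooted at $r'$ (no cycle is created, since following the redirected edges always terminates at the sink $r'$), whose weight equals $W(T) + \mathrm{wt}(\mathcal P) - \sum_v w_T(v\to\cdot)$, the sum taken over the nodes of $\mathcal P$ other than $r$.

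First I would reduce to the case $x^* \in \Z \setminus \E$. If $x^* \notin \Z$, then by (the proof of) Claim \ref{claim:thm1_rec} there is a resistance-$0$ edge from $x^*$ to a genuine singleton recurrence class $z \in \Z$; adding it to $T$ creates a unique cycle through $x^*$ that contains $z$, and deleting $z$'s out-edge in $T$ -- which has resistance at least $1$, since the only resistance-$0$ transition out of a singleton recurrence class is the self-loop, never a tree edge -- produces an in-tree rooted at $z$ of weight at most $W(T)-1$, contradicting minimality. Hence $x^* \in \Z$, and since $x^* \notin \E$, we have $x^* = z_0 \in \Z \setminus \E$.

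Next I would reroute $T$ toward $\E$. Because the baseline matching at $z_0$ is unstable, the Roth--Vande Vate result \cite{roth1990random} together with Claim \ref{claim:thm1_w1} yields a path (which may be taken simple) $\mathcal P = (z_0 \to h_0 \to z_1 \to \cdots \to h_{k-1} \to z_k = e)$ in $\G$ with $e \in \E$ and $k \ge 1$, where each $z_i \to h_i$ has resistance $1$, each $h_i \to z_{i+1}$ has resistance $0$, each intermediate $z_i$ (for $i<k$) lies in $\Z \setminus \E$, each $h_i$ is a transient ``hopeful'' state, and $\mathrm{wt}(\mathcal P) = k$. Splicing $\mathcal P$ into $T$ gives an in-tree $T'$ rooted at $e$ with $W(T') = W(T) + k - \Sigma$, where $\Sigma = \sum_{i=1}^{k-1} w_T(z_i\to\cdot) + \sum_{i=0}^{k-1} w_T(h_i\to\cdot) + w_T(e\to\cdot)$. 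Invoking Claim \ref{claim:thm1_w1} for the $z_i$ ($w_T(z_i\to\cdot)\ge 1$), nonnegativity for the $h_i$, and the fact that $e$ is a recurrence class ($w_T(e\to\cdot)\ge 1$) already gives $\Sigma \ge k$, hence $W(T') \le W(T)$.

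The crux -- and the step I expect to be the main obstacle -- is to improve $W(T') \le W(T)$ to $W(T') < W(T)$, i.e.\ to show $\Sigma \ge k+1$. Here one must bring in Claim \ref{claim:thm1_w2}: the path from $e$ back to the root $z_0$ inside $T$ runs from a state in $\E$ to a state outside $\E$, so it has total resistance at least $2$ and therefore is not a single resistance-$1$ edge. This rules out the ``all-light'' configuration in which every $z_i$ and $e$ has $T$-out-resistance exactly $1$ and every $h_i$ exactly $0$ (in that configuration each $h_i$'s out-edge would be its forced transition $h_i \to z_{i+1}$, and tracing the out-edges of $z_1,\dots,z_k=e$ would then close a cycle, forcing one of them to have resistance $\ge 2$); hence $\Sigma \ge k+1$ and $W(T') < W(T)$, the desired contradiction. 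The delicate point is making this accounting airtight: confirming that the extra unit of resistance guaranteed by Claim \ref{claim:thm1_w2} is genuinely additional rather than already counted in the $\ge (k-1)+1$ bound, and handling possible overlaps between $\mathcal P$ and the $e$-to-root path in $T$, as well as markets with several stable matchings so that $\E$ has more than one node.
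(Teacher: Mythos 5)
Your overall architecture matches the paper's: first move the root out of $\X \setminus \Z$ via a zero-resistance edge, then splice a Roth--Vande Vate path of weight $k$ (alternating resistance $1$ and $0$) from the root $z_0 \in \Z \setminus \E$ to some $e \in \E$, and argue the removed out-edges outweigh the added ones. Your accounting up to $\Sigma \ge k$ is fine. The gap is exactly at the step you flag as the crux: ruling out the ``all-light'' configuration. That configuration is \emph{not} contradictory, and your cycle argument does not work. If $w_T(e\to\cdot)=1$, the out-edge of $e$ in $T$ is a single-experimentation edge leading to some non-recurrent state $x \in \X \setminus \Z$ that is generally \emph{not} on your path $\mathcal P$; likewise the resistance-$1$ out-edges of $z_1,\dots,z_{k-1}$ in $T$ need not be the path edges $z_i \to h_i$, so ``tracing the out-edges of $z_1,\dots,z_k$'' does not close a cycle --- those edges can wander off to arbitrary hopeful states and then flow down $T$ to the root $z_0$. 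Claim \ref{claim:thm1_w2} does guarantee that the $T$-path from $e$ to $z_0$ carries total resistance at least $2$, but the second unit can sit at the off-path node $x$ rather than on the edge out of $e$, and your $\Sigma$ sums only over nodes of $\mathcal P$, so it never collects it. Hence in this case your surgery only yields $W(T') \le W(T)$, not a strict decrease.

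The paper closes precisely this case with an extra reroute (its Case 2): since the baseline of $e$ is stable, the unique zero-resistance successor of $x$ leads back to $e$; that edge cannot already be in $T$ (because $e \to x$ is in $T$ and $T$ is a tree), so $x$'s out-edge in $T$ has resistance at least $1$. After deleting $e$'s out-edge in the splice, one may add the zero-resistance edge $x \to e$ and delete $x$'s out-edge, recovering the missing unit and getting $W(T') \le W(T) - 1$. (One also needs that resistances are integers here, which holds since they count experimentations.) So your proposal needs this additional off-path surgery --- or some equivalent way of charging the second unit from Claim \ref{claim:thm1_w2} to a node you actually reroute --- before the contradiction goes through; as written, the key inequality $\Sigma \ge k+1$ is unjustified.
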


\begin{proof}
    We proceed by contradiction. First, consider a state $x \in \X \setminus \Z$, and suppose that $T(x)$ is a minimum weight in-tree rooted at $x$ with total weight $\rho(x)$. Let $z$ be the node that is reached in one timestep when all agents repeat their baseline actions in $x$. Since $T(x)$ is a tree, there must be exactly one outgoing edge from $z$, and by Claim \ref{claim:thm1_w1}, this edge must have weight at least 1. Construct a new in-tree rooted at $z$ by removing this outgoing edge from $z$, and adding the edge from $x$ to $z$ which has weight 0. This in-tree has total weight $\rho(x) - 1$, implying a contradiction.
    
    Next, consider a state $z \in \Z \setminus \E$, and suppose that $T(z)$ is a minimum weight tree rooted at $z$ with total weight $\rho(z)$. Let $z, x^1, z^1, \dots, x^k, z^k$ be a sequence of states from $z$ to some state $z^k \in \E$ corresponding to a stable matching, where each pair $(x^l, z^l)$ is obtained from the previous by resolving some blocking pair. We consider two cases:
    
    \noindent \textit{Case 1:} Suppose that the outgoing edge from $e$ has weight greater than or equal to 2. Construct a new in-tree rooted at $z^k$ by removing all of the existing outgoing edges from $x^1, z^1, \dots, x^k, z^k$, adding the new edges that follow this path sequentially as described above, and adding the edge from $z$ to $x^1$. The total weight of the new edges added along the path from $x^1$ to $z^k$ is exactly $k - 1$, and by Claims \ref{claim:thm1_rec} and \ref{claim:thm1_w1}, the total weight along this path before must also have been at least $k - 1$. In this case, since the outgoing edge from $z$ to $x^1$ has weight exactly equal to one, the new in-tree must have weight at most $\rho(z) - 1$, implying a contradiction.
    
    \noindent\textit{Case 2:} Suppose that the outgoing edge from $e$ has weight exactly equal to one. From the discussion in Claim \ref{claim:thm1_w2}, this must correspond to the case where exactly one proposer experiments in $e$, meaning that the outgoing edge from $e$ state $x \in \X \setminus \Z$ in one transition. Construct a new in-tree rooted at $z^k$ by removing all of the existing outgoing edges from $x^1, z^1, \dots, x^k, z^k$, adding the new edges that follow this path sequentially as described above, adding the edge from $z$ to $x^1$, and adding the edge from $x$ to $e$. By Claim \ref{claim:thm1_rec}, the unique minimum weight edge out of $x$ leads back to $e$; however, because $T(z)$ is a tree, this edge could not have been a part of $T(z)$, since the edge from $e$ to $x$ was already part of $T(z)$. Thus, the outgoing edge from $x$ in $T(z)$ must have resistance at least $1$. Furthermore, since the resistance of the outgoing edge from $e$ has weight exactly 1, the total resistance of this new tree rooted at $z^k$ is at most $\rho(z) - 1$, which establishes another contradiction. We conclude that the only stochastically stable states of the process must belong to $\E$.
\end{proof}

\stepcounter{mycounter}
\subsection{Proof of Theorem \ref{thm:aosm}}

Several components of the proof of Theorem \ref{thm:sm} extend to the proof of Theorem \ref{thm:aosm}. The Markov process induced by proposer and acceptor-optimal trial-and-error learning transitions over the same state space $\X$, and the recurrence classes of the unperturbed process are still the singleton states in $\Z$. The minimum resistance of a transition out of a state $e$ remains exactly 2 and occurs as described in Claim \ref{claim:thm1_w2}. However, the minimum resistance transitions from states $z \in \Z \setminus \E$ are now slightly different.

\begin{claim}\label{claim:thm3_w1}
    The minimum of the weights of the out-edges from a node $z \in \Z \setminus \E$ is strictly greater than 1 and less than 2.
\end{claim}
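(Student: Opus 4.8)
The plan is to prove the two one-sided bounds separately, closely following the proof of Claim~\ref{claim:thm1_w1} but tracking the resistance now contributed by the stochastic acceptor mood updates of Algorithm~\ref{alg:a_update_2}. For the upper bound, since $z \in \Z \setminus \E$ the baseline action profile $\underline a$ of $z$ induces an unstable matching, hence admits a blocking pair $(P_i, A_j)$. I would exhibit the escape path in which, at the first step, $P_i$ experiments and proposes to $A_j$ (resistance $1$) and $A_j$---whose realized utility $U_{A_j}(P_i)$ strictly exceeds its baseline utility precisely because $(P_i, A_j)$ blocks---transitions into the hopeful mood, which by Algorithm~\ref{alg:a_update_2} costs resistance $G(U_{A_j}(P_i)) \in (0, 0.5)$ if $A_j$ was content and $F(U_{A_j}(P_i)) \in (0.5, 1)$ if $A_j$ was discontent; simultaneously $P_i$ becomes hopeful at resistance $0$. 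From the resulting state the unperturbed dynamics are deterministic: $P_i$ and $A_j$ reconfirm the match and settle as content with baselines pointing at one another, their displaced partners pass through the watchful into the discontent mood, and a zero-resistance self-correction of baseline proposal sets carries the process to a recurrence class $z' \ne z$. The total resistance of this path is $1 + G(U_{A_j}(P_i)) < 1.5$ or $1 + F(U_{A_j}(P_i)) < 2$, so in every case the minimum out-edge weight from $z$ is strictly less than $2$.

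For the lower bound, any transition leaving $z$ requires at least one proposer experimentation, so every out-edge weight is at least $1$; to rule out equality I would argue that a path leaving $z$ of total resistance exactly $1$ must return to $z$ and therefore cannot reach a distinct recurrence class. Such a path consists of a single proposer experiment (resistance $1$) followed only by zero-resistance, unperturbed transitions, so in particular no acceptor may commit to the hopeful mood, since from the content mood this costs $G > 0$ and from the discontent mood it costs $F > 0.5$. Absent any such commitment, the experimenting proposer's transient gain is not locked in: its trial acceptor ultimately re-selects its original partner (who re-proposes and now lies in the acceptor's baseline proposal set), the proposer reverts to its baseline action, the transiently watchful agents recover, the proposal sets self-correct, and the process returns to $z$---exactly the bounce-back mechanism behind Claims~\ref{claim:thm1_w1}--\ref{claim:thm1_w2}. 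This contradicts having reached a new recurrence class, so the minimum out-edge weight exceeds $1$.

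I expect the lower bound to be the main obstacle: one must verify carefully that, without an acceptor mood commitment, every unperturbed continuation of a lone proposer experiment genuinely returns the process to $z$, checking the relevant sub-cases (blocking versus non-blocking target acceptor; the target acceptor remaining content versus being driven into the watchful mood; and whether the experimenting proposer becomes hopeful), and showing that freezing a content acceptor's baseline whenever it stays content---together with a discontent acceptor never reacquiring an old baseline---is precisely what prevents a single experiment from taking hold. This redoes, under the modified update rule of Algorithm~\ref{alg:a_update_2}, the bounce-back reasoning that underlies Claims~\ref{claim:thm1_w1}--\ref{claim:thm1_w2}; by contrast, the upper bound is a routine one-path construction.
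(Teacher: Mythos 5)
Your proposal takes essentially the same approach as the paper's proof: the minimal escape from a state $z \in \Z \setminus \E$ costs resistance $1$ for the proposer experimentation plus the acceptor's probabilistic commitment to the hopeful mood under Algorithm~\ref{alg:a_update_2}, which adds $G(u_{A_j}) \in (0,0.5)$ if the acceptor is matched or $F(u_{A_j}) \in (0.5,1)$ if unmatched, yielding a total strictly between $1$ and $2$. Your explicit bounce-back argument for the strict lower bound merely spells out what the paper leaves implicit (that a lone experiment without an acceptor mood commitment returns the process to $z$), so the content matches the paper's proof.
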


\begin{proof}
    Recall that a transition out of such a state $z$ involves resolving a blocking pair $(P_i, A_j)$. If acceptor $A_j$ is content at timestep $t$ and they receive a new proposal that yields utility $u_{A_j}^t > \underline{u}_{A_j}$, then they become hopeful with resistance $G(u_{A_j}^t)$. If acceptor $A_j$ is discontent, indicating that they are unmatched, then they do the same with resistance $F(u_{A_j}^t)$. Thus, for a given state $z$, the minimum weight out-edge has weight $1 + G(u_{A_j})$ if there exists a blocking pair $(P_i, A_j)$ where $A_j$ is matched, or $1 + F(u_{A_j})$ otherwise; here, the $1$ comes from the fact that a proposer must still experiment to cause this deviation. The maximum of the ranges of the function $F$ and $G$ is strictly less than $1$, so the total minimum edge weight is strictly less than $2$.
\end{proof}

These transitions replicate so-called \emph{best response dynamics} in matching markets with full information \cite{ackermann2008uncoordinated}, which proceed in two phases: In each round of the first phase, one matched proposer plays a best response; if the matching is unstable, this involves resolving their most preferable blocking pair. In each round of the second phase (which begins after all of the blocking pairs involving matched proposers have been resolved), one unmatched proposers plays a best response. It is known that from any initial unstable matching $\mu$, every possible sequence of matchings obtained from the best response dynamics terminates at a stable matching \cite{ackermann2008uncoordinated}. Conversely, one could define a similar process whereby each matched acceptor is matched with their most preferred proposer in the first phase, followed by a second phase in which each unmatched acceptor does the same. It follows immediately from the proof in \cite{ackermann2008uncoordinated} that such a process would also terminate at a stable matching. We refer to this process as \emph{acceptor best-response dynamics}.

\begin{claim}\label{claim:thm3_brd}
    Consider any state $z \in \Z \setminus \E$, and let $z, x^1, z^1, \dots, x^k, z^k$ be any sequence of states such that the resistance of each transition in the sequence is minimum over all possible transitions. Then, the baseline action profiles of the subsequence $z, z^1, \dots, z^k$ correspond to the acceptor best-response dynamics, and $z^k$ corresponds to a stable matching. 
\end{claim}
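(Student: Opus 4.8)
The plan is to combine the resistance values from Claim \ref{claim:thm3_w1} with the two monotonicity properties of $F$ and $G$ to determine, at each state, exactly which blocking pair the minimum-resistance transition resolves, and then to recognize the resulting sequence of baseline matchings as a run of acceptor best-response dynamics; termination at a stable matching is then inherited from those dynamics. Write $\mu^l$ for the matching induced by the baseline action profile at $z^l$, and $\mu^0$ for the one at $z$. As in the proof of Claim \ref{claim:thm3_w1}, a minimum-resistance transition out of a recurrence state $z^l \in \Z \setminus \E$ consists of a single proposer experimenting toward an acceptor $A$ that then becomes hopeful, followed by a deterministic, zero-resistance step $x^{l+1}\to z^{l+1}$ back into $\Z$; its net effect on the baseline matching is to resolve a single blocking pair $(P,A)$, with the former partners of $P$ and of $A$ (if any) left unmatched.

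The first step is to exploit that the ranges of $G$ and $F$ lie in $(0,\tfrac{1}{2})$ and $(\tfrac{1}{2},1)$ respectively: by Claim \ref{claim:thm3_w1}, resolving a blocking pair at a matched acceptor costs $1+G(\cdot)<\tfrac{3}{2}$, whereas resolving one at an unmatched acceptor costs $1+F(\cdot)>\tfrac{3}{2}$. Hence a minimum-resistance transition resolves a blocking pair at a matched acceptor whenever the current matching admits one, and turns to an unmatched acceptor only once none of these remain --- precisely the two-phase order of acceptor best-response dynamics (matched acceptors, then unmatched acceptors). The second step uses strict monotonicity of $G$ (resp.\ $F$): among the transitions available in the active phase, the cheapest resolves the blocking pair $(P,A)$ with the largest acceptor-value $U_A(P)$, and since any willing proposer that $A$ strictly prefers would itself form a blocking pair of strictly larger $A$-value, this $P$ is necessarily $A$'s most preferred partner among the proposers willing to match with it. Thus each transition along the sequence implements one legal best-response move --- by a matched (resp.\ unmatched) acceptor in the first (resp.\ second) phase --- so $\mu^0,\mu^1,\dots,\mu^k$ is a run of acceptor best-response dynamics.

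To conclude I would invoke the two properties of (acceptor) best-response dynamics recalled just before the claim, both consequences of the analysis in \cite{ackermann2008uncoordinated}: a second-phase move never creates a new blocking pair at a matched acceptor, so --- since a minimum-resistance transition always resolves a matched-acceptor blocking pair when one is available --- the sequence never reverts from the second phase to the first; and every run of these dynamics terminates, after finitely many moves, at a stable matching. The subsequence $z,z^1,\dots$ can always be extended by another minimum-resistance transition while its current state lies in $\Z \setminus \E$, yet as such a run it cannot continue indefinitely, so it must reach a state $z^k$ with $\mu^k$ stable, i.e.\ $z^k \in \E$, which is the claim.

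The genuinely delicate part is the correspondence asserted in the first paragraph, ``one minimum-resistance transition $=$ one best-response move,'' which must be checked directly against Algorithms \ref{alg:a_act}, \ref{alg:p_act}, \ref{alg:p_update} and \ref{alg:a_update_2}: that the experimenting proposer's offer registers in the targeted acceptor's proposal set (controlling the baseline proposal set $\underline{S}_A$, which could otherwise mask it), that the mood updates of the acceptor, the proposer, and their former partners together produce exactly the claimed resolved matching, and that the return to $\Z$ is deterministic with zero resistance --- most of which is already contained in the proof of Claim \ref{claim:thm3_w1}. The remaining subtlety, where the structural result of \cite{ackermann2008uncoordinated} does the real work, is the phase boundary: once the last matched-acceptor blocking pair has been resolved, the current matching indeed lies in the regime where only unmatched-acceptor blocking pairs can arise.
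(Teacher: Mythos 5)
Your proof follows essentially the same route as the paper's: the gap between the ranges of $F$ and $G$ forces matched-acceptor blocking pairs to be resolved before unmatched-acceptor ones, strict monotonicity of $F$ and $G$ makes each minimum-resistance transition resolve the moving acceptor's best available blocking pair, and termination at a stable matching then follows from the acceptor best-response dynamics result adapted from \cite{ackermann2008uncoordinated}. If anything, you are more explicit than the paper about the phase-boundary point (that a second-phase move creates no new matched-acceptor blocking pair, so the sequence never reverts to the first phase) and about verifying the ``one minimum-resistance transition equals one best-response move'' correspondence against the algorithms, both of which the paper dispatches with ``it is clear that.''
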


\begin{proof}
    From Claim \ref{claim:thm3_w1}, the minimum resistance transition out of any state $z$ corresponds to the situation where a proposer experiments and is matched with a new acceptor, who accepts them probabilistically according to a function that depends on whether they are matched ($G$) or unmatched ($F$).
    The range of the function $F$ is strictly greater than the range of $G$, which implies that in a given state $z$ whose corresponding matching has multiple blocking pairs, a blocking pair consisting of a matched acceptor is more likely to be resolved than a blocking pair consisting of an unmatched acceptor. Furthermore, if an acceptor participates in multiple blocking pairs, they are most likely to resolve the blocking pair that yields the greatest utility, since $F$ and $G$ are both strictly monotonically decreasing. From these statements, it is clear that the sequence $z, x^1, z^1, \dots, x^k, z^k$ can be divided into two subsequences $z, x^1, z^1, \dots, x^i, z^i$ and $x^{i+1}, z^{i + 1}, \dots, x^k, z^k$, where every transition in the first (second) sequence corresponds to the resolution of a blocking pair involving a matched (unmatched) acceptor. This aligns precisely with the acceptor best-response dynamics, and thus must also terminate at some state $z^k$ corresponding to a stable matching.
\end{proof}

Now, we introduce another important definition. If $\mu$ is a stable matching, we say that $\mu_{-P_i}$ is the \emph{near-stable} matching with respect to $\mu$ resulting from making proposer $P_i$ and their partner $\mu_{P_i}$ single. It is known that if the proposers follow the best response dynamics starting from a near-stable matching $\mu_{-P_i}$, then the stable matching $\mu'$ reached at the end of the process satisfies $\mu_{P_i}' \succeq_{P_i} \mu_{P_i}$ for all $P_i \in \PP$ \cite{shah2024learning}. Following similar reasoning, it is easily verifiable that every sequence of acceptor-best response dynamics starting from a near-stable matching $\mu_{-A_j}$ terminates a stable matching that satisfies $\mu_{A_j}' \succeq_{A_j} \mu_{A_j}$ for all $A_j \in \A$.

Let $\G'$ denote the graph over $\X$ with edge weights corresponding to the resistance of the transitions induced by these policies. The same reasoning from the proofs of \ref{thm:sm} can be applied to show that a minimum weight tree can only be rooted at a node $e \in \E$, so we do not repeat the arguments here. Let $e^* \in \E$ denote the state corresponding to the acceptor-optimal stable matching. To complete the proof, we must establish the following:

\begin{claim}\label{claim:thm3_tree}
    The total weight of a tree rooted at a node $e \in \E \setminus \{e^*\}$ is strictly greater than the total weight of a tree rooted at $e^*$.
\end{claim}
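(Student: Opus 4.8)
The plan is to identify the stochastically stable state via an in-tree surgery argument, in the same spirit as Claim \ref{claim:thm1_tree}. Since every minimum-weight in-tree of $\G'$ is rooted at an element of $\E$ (by the reasoning inherited from the proof of Theorem \ref{thm:sm}), it suffices to show that for each $e \in \E \setminus \{e^*\}$ and each minimum-weight in-tree $T(e)$ rooted at $e$, one can build an in-tree rooted at $e^*$ of strictly smaller total weight. By the characterization recalled above (the definition of the acceptor-optimal matching), $e^*$ acceptor-Pareto-dominates every other stable matching, so $e^*$ and $e$ are comparable in the acceptor-preference order with $e^*$ strictly above $e$; this comparability is what makes the surgery available.

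The first ingredient I would assemble is an explicit ``ascent'' from $e$ to $e^*$ inside $\G'$. Starting from any stable $\nu \ne e^*$, pick an acceptor $A_j$ (matched in $\nu$) with $e^*_{A_j} \succ_{A_j} \nu_{A_j}$; running acceptor best-response dynamics from the near-stable matching $\nu_{-A_j}$ terminates, by the acceptor analogue of the vacancy-chain/Pareto results cited from \cite{shah2024learning, ackermann2008uncoordinated}, at a stable matching that weakly improves every acceptor, and with appropriate successive choices of released acceptor one eventually reaches $e^*$. This yields a finite chain $e = \nu_0, \nu_1, \dots, \nu_r = e^*$. By the escape analysis in Claim \ref{claim:thm1_w2} and the dynamics in Claim \ref{claim:thm3_brd}, each link $\nu_s \rightsquigarrow \nu_{s+1}$ is realized by a path consisting of a resistance-$2$ escape (a content proposer unilaterally playing $\emptyset$, producing $\nu_{s,-A_{j_s}}$) followed by the minimum-resistance acceptor-best-response descent to $\nu_{s+1}$, whose edges carry resistances of the form $1+G(\cdot)$ or $1+F(\cdot)$ evaluated at the improved acceptor utilities encountered along the way. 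Concatenating the links gives a concrete path $Q$ from $e$ to $e^*$, and hence an upper bound on the minimum resistance of any $e \to e^*$ path.

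The central and hardest ingredient is a resistance asymmetry: the minimum resistance of a path from $e$ to $e^*$ is strictly smaller than the minimum resistance of a path from $e^*$ to $e$. The intuition is that travelling up the acceptor-preference order only ever makes acceptors hopeful about strictly more preferred partners, so its descent costs are $F(\cdot), G(\cdot)$ evaluated at \emph{larger} utilities, whereas travelling down must at some point drive acceptors toward strictly \emph{worse} partners; since acceptor best-response dynamics out of a near-stable configuration can only improve acceptors, such a step cannot be a minimum-resistance transition, and the strict monotonicity of $F$ and $G$ then forces a strictly larger cost. The acceptor-optimality of $e^*$ is exactly what pins down the endpoint of this comparison: because $e^*$ is acceptor-optimal, any escape from $e^*$ followed by the acceptor-best-response descent returns to $e^*$, so a path that genuinely leaves the basin of $e^*$ must pay this surcharge at least once. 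Thus $w(Q) \le (\text{min resistance } e \to e^*) < (\text{min resistance } e^* \to e) \le w(P)$, where $P$ is the unique $e^* \to e$ path in $T(e)$.

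Given these facts, the surgery is to re-root $T(e)$ at $e^*$ by deleting the edges of $P$ and inserting the ascent path $Q$: this creates a single cycle, which we break by deleting one of its edges, and any subtree hanging off a deleted vertex is reattached at its nearest surviving vertex along $Q$. Since $w(Q) < w(P)$ and no other edge weight changes, the resulting in-tree rooted at $e^*$ is strictly lighter than $T(e)$, contradicting the minimality of $T(e)$ and establishing the claim. I expect the two genuinely delicate points to be (i) proving the asymmetry quantitatively — both directions of travel between two stable matchings start with an escape of resistance exactly $2$, so all of the strict savings must be extracted from the descent segments through the monotonicity of $F$ and $G$ and the sink property of $e^*$ — and (ii) carrying out the surgery cleanly, i.e., tracking the intermediate non-stable states so that removing $P$ and inserting $Q$ really produces an in-tree with no disconnected vertices and no double-counted edges.
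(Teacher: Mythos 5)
Your argument hinges on a ``resistance asymmetry'' lemma --- that the minimum resistance of a path from $e$ to $e^*$ is strictly smaller than that of a path from $e^*$ to $e$ --- and this is both unproven and, as a general quantitative claim, doubtful. Your ascent $Q$ passes through a chain of intermediate stable matchings $e=\nu_0,\nu_1,\dots,\nu_r=e^*$, and \emph{each} link must begin with an escape of resistance $2$ (Claim \ref{claim:thm1_w2}), so its total weight grows with $r$. Going the other way, a path from $e^*$ to $e$ must indeed contain at least one transition that is not of minimum resistance (this part of your intuition matches Claim \ref{claim:thm3_brd} and the sink property of $\mu^*$), but the \emph{excess} of such a deviation is only a difference of $G$ (or $F$) values, which can be arbitrarily small; nothing rules out markets in which one cheap deviation from $e^*$ reaches $e$ for less total resistance than a multi-escape ascent from $e$ to $e^*$. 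Moreover, even granting the asymmetry, your final chain is written backwards: since $Q$ is a particular path, $w(Q)$ is an \emph{upper bound on} the minimum $e\to e^*$ resistance, i.e.\ $\min \le w(Q)$, so ``$w(Q)\le \min(e\to e^*) < \min(e^*\to e) \le w(P)$'' does not yield $w(Q)<w(P)$ unless you separately prove $Q$ is itself minimum-resistance and that this minimum beats $w(P)$. Finally, the surgery itself is not well defined: nodes along $Q$ may already carry outgoing edges in $T(e)$, the interior nodes of the deleted path $P$ are left with no outgoing edge, and ``reattaching subtrees at the nearest surviving vertex'' introduces edges whose resistances you have not accounted for, so the claim that ``no other edge weight changes'' is unjustified.

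The paper's proof sidesteps path-to-path comparisons entirely. It constructs a benchmark tree $T(e^*)$ in which \emph{every} node of $\X$ uses a minimum-weight outgoing edge (unstable baseline states follow the acceptor best-response descent of Claim \ref{claim:thm3_brd}; each stable $e\neq e^*$ exits with the always-available resistance-$2$ move of Claim \ref{claim:thm1_w2} to a near-stable state). Any competitor tree $T(e)$ is then compared node by node: each of its outgoing edges weighs at least the corresponding per-node minimum, the root swap is weight-neutral because the minimum escape from any state in $\E$ is exactly $2$, and strictness comes from the unique $e^*\to e$ path inside $T(e)$, which cannot consist solely of minimum-resistance edges because acceptor best-response dynamics from any near-stable perturbation of $\mu^*$ return to $\mu^*$. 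That is the structural idea your proposal is missing: the per-node minima cancel in the comparison, so you never need to bound, or even discuss, the relative resistances of $e\to e^*$ versus $e^*\to e$ paths.
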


\begin{proof}
    We begin with an explicit construction of a tree rooted at $e^*$. Initialize $T(e^*)$ as an empty set of edges, and add to it as follows. Consider a node $z \in \Z \setminus \E$. If $z$ already has an outgoing in $T(e^*)$, then do nothing. Otherwise, add a minimum weight edge from $z$ to some other node $z^1$. Continue in this fashion, adding nodes $z^2, z^3, \dots$ until either reaching a state $z^k \in T(e^*)$ or until reaching a state $z^k \in \E$. Once all states $z \in \Z \setminus \E$ have been considered, for each state $e \neq e^*$, add an outgoing edge with weight 2 to a node whose baseline action corresponds to a near-stable matching. By construction, the resulting graph $T(e^*)$ is an in-tree rooted at $e^*$ with the property that the weight of the unique outgoing edge from every node is minimum among the weights of all possible outgoing edges from that node. Let $\rho(e^*)$ denote the total weight of this tree.
    
    Now, suppose that $T(e)$ is a tree rooted at $e \neq e^*$ with total weight $\rho(e)$. Since $T(e)$ is a tree, there exists a unique path from $e^*$ to $e$ in $T(e)$. The outgoing edge from $e^*$ must lead to a state corresponding to a near-stable matching. By Claim \ref{claim:thm3_brd}, any path consisting solely of minimum resistance transitions must correspond to the acceptor best-response dynamics. Additionally, every sequence of acceptor best-response dynamics initiated from a near-stable matching with respect to $\mu^*$ must lead back to $\mu^*$. It necessarily follows that at least one transition along the path from $e^*$ to $e$ has non-minimum resistance. Given that the weight of the outgoing edge from $e$ is at least 2 and the weight of the outgoing edge from $e^*$ is 2, and given that the weight of the outgoing edge from every node in $T(e)$ is at least the weight of the corresponding outgoing edge in $T(e^*)$ (by construction), it follows that $\rho(e) > \rho(e^*)$.
\end{proof}

\end{document}